\def\F{\mathbb{F}}
\def\Tr{\text{\rm Tr}}
\def\Ker{\text{\rm Ker}}
\def\D{\textrm{D}}
\newtheorem{theorem}{Theorem}[section]
\newtheorem{lemma}[theorem]{Lemma}
\theoremstyle{definition}
\newtheorem{remark}[theorem]{Remark}
\numberwithin{equation}{section}
\begin{document}

\title[Weight hierarchies]{Weight hierarchies of 3-weight linear codes from two $p$-ary quadratic functions }

\author{Xiumei Li}
\address{School of Mathematical Sciences, Qufu Normal University, Qufu Shandong, {\rm 273165}, China}
\email{lxiumei2013@qfnu.edu.cn}

\author{Fei Li}
\address{Faculty of School of Statistics and Applied Mathematics,
Anhui University of Finance and Economics, Bengbu, {\rm 233030}, Anhui, P.R.China}
\email{cczxlf@163.com}

\subjclass[]{94B05 \and 11T71}

\keywords{Linear code; Quadratic form; Weight distribution; Weight hierarchy; Generalized Hamming weight.}

\begin{abstract}
The weight hierarchy of a linear code has been an important research topic in coding theory since Wei's original work in 1991.
Choosing $ D=\Big\{(x,y)\in \Big(\F_{p^{s_1}}\times\F_{p^{s_2}}\Big)\Big\backslash\{(0,0)\}: f(x)+g(y)=0\Big\}$ as a defining set ,
where $f(x),g(y)$ are quadratic forms over $\mathbb{F}_{p^{s_i}},i=1,2$, respectively, with values in $\F_p$,
we construct a family of 3-weight $p$-ary linear codes and
determine their weight distributions and weight hierarchies completely.
Most of the codes can be used in secret sharing schemes.
\end{abstract}

\maketitle

\section{Introduction}

\label{intro}

%\subsection{Motivation}

Let $p$ be an odd prime number and $ \mathbb{F}_{p^s} $ be the finite field with $ p^{s} $
elements. Denote by $\mathbb{F}_{p^s}^{*}$ the set of the nonzero elements of $ \mathbb{F}_{p^s} $.

Let $C$ be a $k$-dimensional subspace of $ \mathbb{F}_{p}^{n} $. It is called an $[n,k,d]\ p$-ary linear code of length $n$ with  minimum (Hamming) distance $d$.
For $ 1\leq r\leq k$, the concept of generalized Hamming weight (GHW) $d_r(C)$ of $C$ can be viewed as an extension of Hamming weight.
We recall the definition of GHWs of linear codes.
Let
$ [C,r]_{p} $ be the set of all the $\mathbb{F}_{p}$-vector subspaces with dimension $r$.
For $ H \in [C,r]_{p}$, the support $ \textrm{Supp}(H)$ of $H$ is the set of coordinates where not all codewords of $H$ are zero, that is,
$$ \textrm{Supp}(H)=\Big\{i:1\leq i\leq n, c_i\neq 0 \ \ \textrm{for some $c=(c_{1}, c_{2}, \cdots , c_{n})\in H$}\Big\}.$$
The $r$-th generalized Hamming weight of $C$, which is also called the $r$-th minimum support weight, is
$$
d_{r}(C)=\min\Big\{|\textrm{Supp}(H)|:H\in [C,r]_{p}\Big\}, \ 1\leq r\leq k.
$$
It is easy to see that $ d=d_{1}(C)$.
The sequence $ \{d_{1}(C),d_{2}(C),\cdots,d_{k}(C)\}$ is called the weight hierarchy of $C$.

The generalized Hamming weight was introduced in 1977 by Helleseth, Kl{\o}ve et al. \cite{9HK92,K78}. Victor Wei \cite{20WJ91} proved that they can characterize the cryptography performance of a linear code over the wiretap channel of type II. From then on, much more attention was paid to generalized Hamming weights. The weight hierarchies for some well-known classes of codes were determined, such as Hamming codes, Reed-Muller codes, Reed-Solomon codes and Golay codes.  To determine the weight hierarchy of linear codes is relatively challenging. During the past three decades, there were research results
about weight hierarchies of some classes of linear codes \cite{11HP98,14JL97,20WJ91,27WZ94,13JF17,18LF17,19LF17,LL20-0,LL20,LL22,19LW19,21XL16,22YL15,HLL23}.
% However, .

A generic construction of linear codes was proposed by Ding et al. \cite{DD14,DD15,DLN08, DN07}  as below.
Denote by $\Tr^s$ the trace function from $\mathbb{F}_{p^s}$ onto $\mathbb{F}_p$.
Let $ D= \{d_{1},d_{2},\cdots,d_{n}\}$ be a subset of $\mathbb{F}_{p^s}^{\ast}.$
A $p$-ary linear code of length $n$ is defined as follows:
\begin{eqnarray}\label{defcode0}
         C_{D}=\{\left( \Tr^s(xd_1), \Tr^s(xd_2),\ldots, \Tr^s(xd_{n})\right):x\in \mathbb{F}_{p^{s}}\},
\end{eqnarray}
and $D$ is called the defining set of $C_{D}$.
Li et al. \cite{LYF18} extended Ding's defining-set construction, they constructed a $p$-ary linear codes $C_{D}$ by
\begin{eqnarray}\label{defcode2}
         C_{D}=\{( \Tr^s(ax+by))_{(x,y)\in D}:a,b\in \mathbb{F}_{p^{s}}\},
\end{eqnarray}
where $D$ is a subset of $\F_{p^s}^2\backslash\{(0,0)\}$. Furthermore, many linear
codes have been obtained from some cryptographic functions in above two construction method, some optimal linear
codes with a few weights can be constructed (see \cite{5DJ16,DD14,DD15,DLN08, DN07,19TXF17,LYF18,JLF19,LL20,LL22,HLL23,S22} and the references therein).

In this paper, we fix the following notation. Let $s_i$ be positive integers, $q_i= p^{s_i}, i=1,2$. Denote $\mathbb{F}=\mathbb{F}_{q_1}\times\mathbb{F}_{q_2},\mathbb{F}^\star=\mathbb{F}\Big\backslash\{(0,0)\}$ and $s=s_1+s_2$.
Define
a $p$-ary linear code $ C_{D} $ as follows:
\begin{eqnarray}\label{defcode1}
         C_{D}=\Big\{\Big(\Tr^{s_1} (x x_i)+\Tr^{s_2} (y y_i)\Big)_{(x_i,y_i)\in D}:(x,y)
         \in \mathbb{F}\Big\},
\end{eqnarray}
where \begin{align}\label{set:D1}
D=\Big\{(x,y)\in \mathbb{F}^\star: f(x)+g(y)=0\Big\},
\end{align}
where $f(x),g(y)$ are quadratic forms over $\mathbb{F}_{q_i},i=1,2$, respectively, with values in $\F_p$. We will study the weight distribution and weight hierarchy of $C_{D}$ in \eqref{defcode1}.

For the linear codes $C_D$ defined in \eqref{defcode1}, a general formula is likely be employed to calculate the generalized Hamming weight $d_r(C_D)$. It is presented in the following lemma.

\begin{lemma}[{\cite[Proposition 2.1]{HLL23}}]\label{pro:d_r}
For each $ r $ and $ 1\leq r \leq s$, if the dimension of $ C_{D} $
is $ s$, then
\begin{equation}\label{eq:d_r}
d_{r}(C_{D})= n-\max\big\{|D \cap H|: H \in [\mathbb{F},s-r]_{p}\big\}.
\end{equation}
\end{lemma}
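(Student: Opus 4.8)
The plan is to produce an explicit inclusion-reversing bijection between the $r$-dimensional subcodes of $C_D$ and the $(s-r)$-dimensional $\mathbb{F}_p$-subspaces of $\mathbb{F}$, under which the set of zero-coordinates of a subcode is carried precisely to the intersection of that subspace with $D$. Granting such a correspondence, minimizing $|\Supp(H)|$ over $H\in[C_D,r]_p$ becomes maximizing $|D\cap H'|$ over $H'\in[\mathbb{F},s-r]_p$, which is exactly \eqref{eq:d_r}.

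First I would set up the ambient bilinear algebra. View $\mathbb{F}=\mathbb{F}_{q_1}\times\mathbb{F}_{q_2}$ as an $\mathbb{F}_p$-vector space of dimension $s=s_1+s_2$ and define
\[
\langle (x,y),(u,v)\rangle=\Tr^{s_1}(xu)+\Tr^{s_2}(yv).
\]
Since each trace form $(a,b)\mapsto\Tr^{s_i}(ab)$ on $\mathbb{F}_{q_i}$ is nondegenerate, $\langle\cdot,\cdot\rangle$ is a nondegenerate symmetric $\mathbb{F}_p$-bilinear form on $\mathbb{F}$. Consequently, for any subspace $V\subseteq\mathbb{F}$ the orthogonal complement $V^{\perp}=\{w\in\mathbb{F}:\langle w,v\rangle=0\text{ for all }v\in V\}$ satisfies $\dim V+\dim V^{\perp}=s$ and $(V^{\perp})^{\perp}=V$, so $V\mapsto V^{\perp}$ is an inclusion-reversing bijection from $[\mathbb{F},r]_p$ onto $[\mathbb{F},s-r]_p$.

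Next I would identify subcodes with subspaces. Writing $D=\{d_1,\dots,d_n\}$ and, for $w\in\mathbb{F}$, $\mathbf{c}(w)=(\langle w,d_i\rangle)_{1\le i\le n}$, we have $C_D=\{\mathbf{c}(w):w\in\mathbb{F}\}$ with $w\mapsto\mathbf{c}(w)$ being $\mathbb{F}_p$-linear. The hypothesis $\dim C_D=s$ forces this map to be injective, hence an isomorphism $\mathbb{F}\xrightarrow{\ \sim\ }C_D$; therefore every $H\in[C_D,r]_p$ equals $\mathbf{c}(V):=\{\mathbf{c}(w):w\in V\}$ for a unique $V\in[\mathbb{F},r]_p$, and $V\mapsto\mathbf{c}(V)$ is a bijection $[\mathbb{F},r]_p\to[C_D,r]_p$. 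Now for each index $i$ we have $\langle w,d_i\rangle=0$ for every $w\in V$ if and only if $d_i\in V^{\perp}$, so $i\notin\Supp(\mathbf{c}(V))$ exactly when $d_i\in V^{\perp}$; counting complements gives $|\Supp(\mathbf{c}(V))|=n-|D\cap V^{\perp}|$. Taking the minimum over $V\in[\mathbb{F},r]_p$ and using the bijection $V\mapsto V^{\perp}$ onto $[\mathbb{F},s-r]_p$ yields
\[
d_r(C_D)=\min_{V\in[\mathbb{F},r]_p}\bigl(n-|D\cap V^{\perp}|\bigr)=n-\max\{|D\cap H|:H\in[\mathbb{F},s-r]_p\},
\]
which is \eqref{eq:d_r}.

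There is no serious obstacle here; the one point that must be handled with care is the injectivity of $w\mapsto\mathbf{c}(w)$, which is precisely where the hypothesis $\dim C_D=s$ enters — without it, distinct $r$-dimensional subspaces of $\mathbb{F}$ can collapse to the same (possibly lower-dimensional) subcode and the clean duality between $[C_D,r]_p$ and $[\mathbb{F},s-r]_p$ breaks down. The remainder is standard bookkeeping with a nondegenerate bilinear form.
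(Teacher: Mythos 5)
Your proof is correct, and it is the standard duality argument (nondegeneracy of the trace pairing, the bijection $V\mapsto V^{\perp}$ between $[\mathbb{F},r]_p$ and $[\mathbb{F},s-r]_p$, and the count $|\Supp(\mathbf{c}(V))|=n-|D\cap V^{\perp}|$) that underlies Proposition 2.1 of \cite{HLL23}, which this paper cites without reproducing the proof. You correctly isolate the role of the hypothesis $\dim C_D=s$ as guaranteeing injectivity of $w\mapsto\mathbf{c}(w)$, so nothing further is needed.
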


It should remarked that, for a $p$-ary quadratic function $f(x)$ defined over $\F_{p^s}$, the weight hierarchy of $C_{D_f^a}$ in \eqref{defcode0} was determined by Wan \cite{27WZ94,WW97} for $a=0$ and Li et al. \cite{19LF17,LL22} for $a\in\F_p^*$, where
$D_f^a=\{x\in\F_{p^s}:f(x)=a\}$. The weight hierarchy of $C_{D_\alpha}$ in \eqref{defcode2}, where $D_\alpha = \{(x,y)\in \F_{p^s}^2\backslash\{(0,0)\}:f(x)+\Tr^s(\alpha y)=0\} $ and $f(x)=\sum\limits_{i=0}^{s-1}\Tr^s(a_ix^{p^i+1})$ non-degenerate, and the weight hierarchy of $C_{D_\alpha}$ in \eqref{defcode1}, where $D_\alpha = \{(x,y)\in \F_{p^s}^2\backslash\{(0,0)\}:f(x)+\Tr^s(\alpha y)=0\}$ were both determined completely by Li et al. \cite{LL22,HLL23}.

In Section 2, we set the main notation and give some
properties of $p$-ary quadratic forms. In Section 3, we introduce the parameters of 3-weight linear codes derived from two $p$-ary quadratic forms over different finite fields and determine their weight hierarchies completely. Section 4 completes the paper.

\section{Preliminaries}

\subsection{Some notations fixed throughout this paper}

  For convenience, we fix the following notations. One is referred to \cite{IR90} for basic results on cyclotomic field $ \mathbb{Q}(\zeta_{p}).$
\begin{itemize}
\item Let $\Tr^s$ be the trace function from $\mathbb{F}_{q}$ to $\mathbb{F}_{p}$.
Namely, for each $x\in \mathbb{F}_{q}$,
$$
\Tr^s(x)=x+x^{p}+ \cdots +x^{p^{s-1}}.
$$
\item $p^{\ast}=(-1)^{\frac{p-1}{2}}p,\ \zeta_{p}=\exp(\frac{2\pi i}{p})$.
\item $\upsilon$ is a function on $\F_p$ satisfying $\upsilon(0)=p-1$ and $\upsilon(z)=-1$ for $z\in \mathbb{F}_{p}^{\ast}$.
\item $\bar{\eta}$ is the quadratic character of $\mathbb{F}_{p}^{\ast}$.
It is extended by letting $\bar{\eta}(0)=0$.
\item Let $\mathbb{Z}$ be the rational integer ring and $\mathbb{Q}$ be the rational field. Let $\mathbb{K}$ be the cyclotomic
field $\mathbb{Q}(\zeta_{p})$. The field extension $\mathbb{K}/\mathbb{Q}$ is Galois of degree $p-1$.
The Galois group $\mathrm{Gal}(\mathbb{K}/\mathbb{Q})=\Big\{\sigma_{z}: z\in (\mathbb{Z}/ p\mathbb{Z})^{\ast}\Big\}$, where
$\sigma_{z}$ is defined by $\sigma_{z}(\zeta_{p})=\zeta_{p}^{z}$.
\item $\sigma_{z}(\sqrt{p^{\ast}})=\bar{\eta}(z)\sqrt{p^{\ast}}$, for $1\leq z \leq p-1$.
\item Let $\Big\langle\alpha_{1},\alpha_{2},\cdots,\alpha_{r}\Big\rangle$ denote a space spanned by $\alpha_{1},\alpha_{2},\cdots,\alpha_{r}$.
\end{itemize}

\begin{lemma}[{\cite[Lemma 4]{19TXF17}}]\label{lem:7}
 With the symbols and notations above, for any $z\in \mathbb{F}_{p}$, we have the following.

$$
\sum\limits_{y\in \mathbb{F}_{p}^{\ast}}\sigma_{y}((p^{\ast})^{\frac{r}{2}}\zeta_{p}^{z})
=\left\{\begin{array}{ll}
\bar{\eta}(-z)p^r(p^{\ast})^{-\frac{r-1}{2}}, & \textrm{if\ } r  \ \textrm{is odd\ }, \\
\upsilon(z)p^r(p^{\ast})^{-\frac{r}{2}},  & \textrm{if\ } r \ \textrm{is even\ }.
\end{array}
\right.
$$

\end{lemma}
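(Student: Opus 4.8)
The plan is to separate the two parities of $r$ and, in each case, reduce the Galois sum to a classical character sum over $\mathbb{F}_p$, then match the resulting constant against the claimed value. Throughout I will take $\sqrt{p^{\ast}}$ to be the quadratic Gauss sum $\sum_{t\in\mathbb{F}_p^{\ast}}\bar{\eta}(t)\zeta_p^t$, which is the normalization compatible with the Galois action $\sigma_z(\sqrt{p^{\ast}})=\bar{\eta}(z)\sqrt{p^{\ast}}$ recorded above; I will use freely that $\big(\sqrt{p^{\ast}}\big)^2=p^{\ast}$, that $(p^{\ast})^2=p^2$, and that $\bar{\eta}(-1)=(-1)^{(p-1)/2}$, so that $p^{\ast}=(-1)^{(p-1)/2}p$.

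First I would handle $r$ even, say $r=2m$. Then $(p^{\ast})^{r/2}=(p^{\ast})^m$ lies in $\mathbb{Q}$, so every $\sigma_y$ fixes it and the sum collapses to $(p^{\ast})^m\sum_{y\in\mathbb{F}_p^{\ast}}\zeta_p^{yz}$. The inner sum equals $p-1$ when $z=0$ and $-1$ when $z\neq0$, i.e. it equals $\upsilon(z)$; hence the whole expression is $\upsilon(z)(p^{\ast})^m$, and $(p^{\ast})^2=p^2$ yields $(p^{\ast})^m=p^{2m}(p^{\ast})^{-m}=p^r(p^{\ast})^{-r/2}$, which is the asserted value.

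Next I would handle $r$ odd, say $r=2m+1$, writing $(p^{\ast})^{r/2}=(p^{\ast})^m\sqrt{p^{\ast}}$. Since $\sigma_y(\sqrt{p^{\ast}})=\bar{\eta}(y)\sqrt{p^{\ast}}$, the sum becomes $(p^{\ast})^m\sqrt{p^{\ast}}\sum_{y\in\mathbb{F}_p^{\ast}}\bar{\eta}(y)\zeta_p^{yz}$. If $z=0$ the inner sum is $\sum_{y\in\mathbb{F}_p^{\ast}}\bar{\eta}(y)=0$, matching $\bar{\eta}(0)=0$. If $z\neq0$, the substitution $t=yz$ identifies the inner sum with $\bar{\eta}(z^{-1})\sum_{t\in\mathbb{F}_p^{\ast}}\bar{\eta}(t)\zeta_p^t=\bar{\eta}(z)\sqrt{p^{\ast}}$ (using that $\bar{\eta}$ takes only the values $0,\pm1$), so the sum equals $(p^{\ast})^m\bar{\eta}(z)\big(\sqrt{p^{\ast}}\big)^2=\bar{\eta}(z)(p^{\ast})^{m+1}$. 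It then remains to check that $\bar{\eta}(z)(p^{\ast})^{m+1}=\bar{\eta}(-z)p^{2m+1}(p^{\ast})^{-m}$; since $(p^{\ast})^{2m}=p^{2m}$ gives $(p^{\ast})^{2m+1}=(-1)^{(p-1)/2}p^{2m+1}$, and $\bar{\eta}(-1)=(-1)^{(p-1)/2}$ turns $\bar{\eta}(z)$ into $\bar{\eta}(-z)$, the two sides agree.

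There is no real obstacle here; the one point requiring care is the bookkeeping of the two sign factors $(-1)^{(p-1)/2}$ — one coming from $p^{\ast}$ versus $p$, the other from $\bar{\eta}(-1)$ — which is exactly what produces $\bar{\eta}(-z)$ rather than $\bar{\eta}(z)$ in the odd case, together with treating $z=0$ separately in both parity cases. One should also be careful that the sign normalization of $\sqrt{p^{\ast}}$ used is the one making $\sum_{t\in\mathbb{F}_p^{\ast}}\bar{\eta}(t)\zeta_p^t=\sqrt{p^{\ast}}$ (equivalently, compatible with the stated Galois action), since that is precisely what forces $\sqrt{p^{\ast}}\cdot\sum_{t}\bar{\eta}(t)\zeta_p^t=\big(\sqrt{p^{\ast}}\big)^2=p^{\ast}$ in the final step.
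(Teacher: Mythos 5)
Your proof is correct. Note that the paper itself gives no proof of this lemma --- it is quoted verbatim from \cite[Lemma 4]{19TXF17} --- so there is nothing internal to compare against; your argument (splitting on the parity of $r$, pulling the rational factor $(p^{\ast})^{\lfloor r/2\rfloor}$ through the Galois action, and reducing to $\sum_{y}\zeta_p^{yz}=\upsilon(z)$ in the even case and to the quadratic Gauss sum $\sum_{t}\bar{\eta}(t)\zeta_p^{t}=\sqrt{p^{\ast}}$ in the odd case, with the two $\bar{\eta}(-1)$-type sign factors combining to give $\bar{\eta}(-z)$) is exactly the standard computation carried out in that reference, and your care about the sign normalization of $\sqrt{p^{\ast}}$ and the $z=0$ subcases is appropriate.
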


\subsection{Quadratic form}

Viewing $\mathbb{F}_{q}$ with $q=p^s$ as an $\mathbb{F}_{p}$-linear space and fixing $\upsilon_{1},\upsilon_{2},\cdots,\upsilon_{s} \in \mathbb{F}_{q} $ as its $\F_p$-basis.
There is an $\F_p$-linear isomorphism $\F_q\simeq\F_p^s$ defined as
$$x=x_{1}\upsilon_{1}+x_{2}\upsilon_{2}+\cdots+x_{s}\upsilon_{s} \mapsto X=(x_1,x_2,\cdots,x_s),$$
where $X\in \F_p^s$ is called the coordinate vector of $x$ under the basis $v_1,v_2,\cdots,v_s$ of $\F_q$.

Let $f$ be a quadratic form over $\mathbb{F}_{q}$ with values in $\mathbb{F}_{p}$ and \begin{equation*}\label{eq:F}
    F(x,y)=\frac{1}{2}\Big(f(x+y)-f(x)-f(y)\Big), \textrm{for any}\  x,y\in\mathbb{F}_q,
\end{equation*} then $f$ can be represented by
\begin{align}\label{eq:f}
f(x)=f(X)=f(x_{1},x_{2},\cdots,x_{s})&=\sum_{1\leq i,j\leq s}F(v_i,v_j)x_{i}x_{j}
=XAX^T,
\end{align}
where $A=(F(v_i,v_j))_{s\times s}, F(v_i,v_j)\in \mathbb{F}_{p}, F(v_i,v_j)=F(v_j,v_i)$ and $X^T$ is the transposition of $X$. Denote by $R_f=\textrm{Rank}A$ the rank of $f$, we say that $f$ is non-degenerate if $R_f=s$ and degenerate, otherwise. We can find an invertible matrix $M$ over $\F_p$ such that
$$
MAM^T=\textrm{diag}(\lambda_{1},\lambda_{2},\cdots,\lambda_{R_{f}},0,\cdots,0)
$$
is a diagonal matrix, where $\lambda_{1},\lambda_{2},\cdots,\lambda_{R_{f}}\in\F_p^*$.
Let $\Delta_{f}=\lambda_{1}\lambda_{2}\cdots\lambda_{R_{f}}$, and $\Delta_{f}=1$ if $R_{f}=0.$
We call $\bar{\eta}(\Delta_{f})$ the sign $\varepsilon_{f}$ of $f$,
which is an invariant under nonsingular linear transformations in matrix.

For a subspace $H$ of $\F_q$, its dual space $H^{\perp_f}$ is defined by
$$
H^{\perp_f}=\Big\{x\in \mathbb{F}_{q}:\ F(x,y)=0, \ \mbox{for  any} \  y \in H\Big\}.
$$
Restricting the quadratic form $f$ to $H$,
it becomes a quadratic form denoted by $f|_{H}$ over $H$ in $r$ variables.
Let $R_{H}$ and $\varepsilon_{H}$ be the rank and sign of $f|_{H}$ over $H$, respectively.

For $\beta\in \mathbb{F}_{p}$, set $D_{\beta}=\Big\{x\in \mathbb{F}_{q}|f(x)=\beta\Big\}$.
There are some lemmas essential to prove our main results.

\begin{lemma}[{\cite[Lemma 2]{LL20-0}}]\label{lem:1}
Let $f$ be a quadratic form over $\mathbb{F}_{q}, \beta\in \mathbb{F}_{p}$ and $H$ be an $r$-dimensional nonzero subspace of $\mathbb{F}_{q}$, then
$$
|H\cap D_{\beta}|=\left\{\begin{array}{ll}
p^{r-1}\Big(1+\upsilon(\beta)\varepsilon_{H}(p^*)^{-\frac{R_{H}}{2}}\Big),  &\textrm{if\ } \ R_{H}\equiv0\pmod 2, \\
p^{r-1}\Big(1+\overline{\eta}(\beta)\varepsilon_{H}(p^*)^{-\frac{R_{H}-1}{2}}\Big),  &\textrm{if\ } \ R_{H}\equiv1\pmod 2,
\end{array}
\right.
$$
where $\upsilon(\beta)=p-1$ if $\beta=0$, otherwise $\upsilon(\beta)=-1$.
\end{lemma}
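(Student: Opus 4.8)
The plan is to compute $|H\cap D_{\beta}|$ by additive characters of $\F_p$. Writing $\zeta_p$ for the fixed primitive $p$-th root of unity and using the orthogonality relation $\sum_{t\in\F_p}\zeta_p^{tc}=p$ if $c=0$ and $0$ otherwise, one gets
$$
|H\cap D_{\beta}|=\frac1p\sum_{x\in H}\sum_{t\in\F_p}\zeta_p^{t(f(x)-\beta)}
=p^{r-1}+\frac1p\sum_{t\in\F_p^{*}}\zeta_p^{-t\beta}S_t,
\qquad S_t:=\sum_{x\in H}\zeta_p^{tf(x)}.
$$
So the problem reduces to evaluating the quadratic Gauss sum $S_t$ over the subspace $H$ and then summing over $t\in\F_p^{*}$.

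To evaluate $S_t$, I would restrict $f$ to $H$: choosing an $\F_p$-basis of $H$ identifies $f|_H$ with a quadratic form in $r$ variables of rank $R_H$ and sign $\varepsilon_H$, so over $\F_p$ it can be diagonalized, and after an invertible linear change of coordinates $f|_H(y_1,\dots,y_r)=\lambda_1y_1^2+\cdots+\lambda_{R_H}y_{R_H}^2$ with $\lambda_i\in\F_p^{*}$ and $\bar\eta(\lambda_1\cdots\lambda_{R_H})=\varepsilon_H$. Since this change of variables permutes $H$ bijectively, $S_t$ factorizes: the $r-R_H$ absent variables contribute $p^{r-R_H}$ and each remaining factor is the classical one–variable Gauss sum $\sum_{y\in\F_p}\zeta_p^{t\lambda_iy^2}=\bar\eta(t\lambda_i)\sqrt{p^{\ast}}$ with $(\sqrt{p^{\ast}})^2=p^{\ast}$. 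Multiplying gives $S_t=p^{r-R_H}\varepsilon_H\,\bar\eta(t)^{R_H}(p^{\ast})^{R_H/2}$; equivalently, since $f(x)\in\F_p$ forces $\zeta_p^{tf(x)}=\sigma_t(\zeta_p^{f(x)})$ and $\sigma_t(\sqrt{p^{\ast}})=\bar\eta(t)\sqrt{p^{\ast}}$, one has the convenient form $S_t=\sigma_t(S_1)=p^{r-R_H}\varepsilon_H\,\sigma_t\big((p^{\ast})^{R_H/2}\big)$.

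Substituting this back, $|H\cap D_{\beta}|=p^{r-1}+\tfrac1p\,p^{r-R_H}\varepsilon_H\sum_{t\in\F_p^{*}}\sigma_t\big((p^{\ast})^{R_H/2}\zeta_p^{-\beta}\big)$, and Lemma~\ref{lem:7}, applied with exponent $R_H$ and $z=-\beta$ (using $\bar\eta(-z)=\bar\eta(\beta)$ and $\upsilon(-z)=\upsilon(\beta)$), evaluates the inner sum as $\bar\eta(\beta)p^{R_H}(p^{\ast})^{-(R_H-1)/2}$ when $R_H$ is odd and as $\upsilon(\beta)p^{R_H}(p^{\ast})^{-R_H/2}$ when $R_H$ is even. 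In each case the powers of $p$ collapse and yield exactly $p^{r-1}\big(1+\bar\eta(\beta)\varepsilon_H(p^{\ast})^{-(R_H-1)/2}\big)$ and $p^{r-1}\big(1+\upsilon(\beta)\varepsilon_H(p^{\ast})^{-R_H/2}\big)$, which is the claim.

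The only genuinely delicate point, and hence the step to treat carefully, is the evaluation of $S_t$: one must justify that the restriction of $f$ to the subspace $H$ is a bona fide quadratic form whose rank $R_H$ and sign $\varepsilon_H$ are well-defined invariants (independent of the chosen basis of $H$), so that the diagonalization above is legitimate, and then keep consistent track of $(\sqrt{p^{\ast}})^2=p^{\ast}$ against the plain powers of $p$. Everything else — orthogonality of additive characters, the standard one–variable quadratic Gauss sum, and the ready-made Lemma~\ref{lem:7} — is routine bookkeeping.
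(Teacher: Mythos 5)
Your proof is correct. Note, however, that the paper does not prove this statement at all: it is quoted as Lemma~2 of the cited reference \cite{LL20-0}, so there is no in-paper argument to match yours against. Your route --- orthogonality of additive characters, reduction to the Gauss sum $S_t=\sum_{x\in H}\zeta_p^{tf(x)}$, evaluation of $S_t$ by diagonalizing $f|_H$ into $\lambda_1y_1^2+\cdots+\lambda_{R_H}y_{R_H}^2$, and then summing over $t$ via Lemma~\ref{lem:7} with $z=-\beta$ --- is the standard one and is exactly the toolkit this paper uses elsewhere. It is worth observing that your argument runs in the opposite direction to the paper's Lemma~\ref{lem:0}: there the authors take the point counts $|H\cap D_\beta|$ of Lemma~\ref{lem:1} as given and deduce the subspace Gauss sum $\sum_{x\in H}\zeta_p^{f(x)}=\varepsilon_H(p^{\ast})^{R_H/2}p^{r-R_H}$, whereas you establish that Gauss sum independently (by diagonalization and the one-variable quadratic Gauss sum) and deduce the point counts from it. Your version has the advantage of being self-contained; the two identities are essentially equivalent, and your computation in effect supplies a proof of Lemma~\ref{lem:0} as a by-product. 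The one point you rightly flag --- that $R_H$ and $\varepsilon_H$ are invariants of $f|_H$ independent of the chosen basis --- is already guaranteed by the paper's setup, which records that $\bar{\eta}(\Delta_f)$ is invariant under nonsingular linear changes of coordinates, and the sign conventions ($\bar\eta(-z)=\bar\eta(\beta)$, $\upsilon(-\beta)=\upsilon(\beta)$) are handled correctly.
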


\begin{lemma}[{\cite[Lemma 2.5]{HLL23}}]\label{lem:2}
Let $f$ be a quadratic form over $\mathbb{F}_{q}$ with the rank $R_f$. There exists an $e_f$-dimensional subspace
$H $ of $ \mathbb{F}_{q}$ such that $\F_q^{\perp_f}\subseteq H$ and $f(x)=0$ for any $x\in H$, where

$
e_f=\left\{\begin{array}{ll}
s-\frac{R_f+1}{2}, & \textrm{if $R_f$ is odd}, \\
s-\frac{R_f}{2}, & \textrm{if $R_f$ is even and $\varepsilon_{f}=\overline{\eta}(-1)^{\frac{R_f}{2}}$},\\
s-\frac{R_f+2}{2}, & \textrm{if $R_f$ is even and $\varepsilon_{f}=-\overline{\eta}^{\frac{R_f}{2}}$}.
\end{array}
\right.
$
\end{lemma}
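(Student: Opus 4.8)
The plan is to peel the problem down to the classical structure theory of quadratic forms over the prime field $\mathbb{F}_p$. First I would split off the radical. Using the diagonalization $MAM^T=\mathrm{diag}(\lambda_1,\dots,\lambda_{R_f},0,\dots,0)$ recorded above, an $\mathbb{F}_p$-linear change of coordinates puts $f$ in the shape $\sum_{i=1}^{R_f}\lambda_i x_i^2$. Then $\mathbb{F}_q^{\perp_f}$ is the coordinate subspace $\{x_1=\dots=x_{R_f}=0\}$, of dimension $s-R_f$, and $f$ vanishes on it because $XAX^T=(XA)X^T=0$ whenever $XA=0$. Choosing a complementary $R_f$-dimensional coordinate subspace $V$, the restriction $f_0:=f|_V$ is the non-degenerate form $\sum_{i=1}^{R_f}\lambda_i x_i^2$, with the same discriminant, hence the same sign $\varepsilon_f$, as $f$. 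If $W\subseteq V$ is any subspace with $f_0|_W\equiv 0$, then $H:=\mathbb{F}_q^{\perp_f}\oplus W$ contains the radical and satisfies $f|_H\equiv 0$, since for $r\in\mathbb{F}_q^{\perp_f}$ and $w\in W$ we have $f(r+w)=f(r)+f(w)+2F(r,w)=0$; so it remains only to produce such a $W$ of dimension $e_f-(s-R_f)$, which in the three cases equals $(R_f-1)/2$, $R_f/2$, and $(R_f-2)/2$ respectively.

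Next I would invoke the structure theory over $\mathbb{F}_p$ with $p$ odd: $f_0$ decomposes as an orthogonal sum $\mathbb{H}^{\oplus k}\perp f_{\mathrm{an}}$, where $\mathbb{H}$ is the hyperbolic plane (the isotropic binary form, isometric to $\mathrm{diag}(1,-1)$, of discriminant $-1$) and $f_{\mathrm{an}}$ is anisotropic. Concretely this is obtained by repeatedly extracting a hyperbolic plane: as long as the remaining subform has at least three variables it represents zero nontrivially (Chevalley--Warning, or the counting bound $\tfrac{p+1}{2}+\tfrac{p+1}{2}>p$ applied to $\lambda x^2=-\mu y^2-\nu z^2$), so one can split off a hyperbolic plane and continue with its orthogonal complement; the process stops with $f_{\mathrm{an}}$ of dimension $R_f-2k\in\{0,1,2\}$. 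Each hyperbolic summand contains a totally singular line, and since the summands are mutually orthogonal these $k$ lines span a totally singular subspace $W\subseteq V$ with $\dim W=k$ (this $k$ is in fact the maximal possible dimension, the Witt index of $f_0$, though only its existence is needed here).

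Finally I would read off $k$ from $R_f$ and $\varepsilon_f$, using multiplicativity of discriminants under orthogonal sums and of $\bar\eta$. If $R_f$ is odd, then $R_f-2k$ is odd and at most $2$, hence $\dim f_{\mathrm{an}}=1$, $k=(R_f-1)/2$, and $\dim H=(s-R_f)+k=s-\tfrac{R_f+1}{2}$. If $R_f$ is even and $f_{\mathrm{an}}$ is trivial, then $f_0\cong\mathbb{H}^{\oplus R_f/2}$ has discriminant $(-1)^{R_f/2}$, so $\varepsilon_f=\bar\eta(-1)^{R_f/2}$, and $\dim H=(s-R_f)+R_f/2=s-\tfrac{R_f}{2}$. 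If $R_f$ is even and $\dim f_{\mathrm{an}}=2$, then $f_{\mathrm{an}}$ anisotropic forces $\bar\eta(-\Delta_{f_{\mathrm{an}}})=-1$, whence $\varepsilon_f=\bar\eta(-1)^{(R_f-2)/2}\bar\eta(\Delta_{f_{\mathrm{an}}})=-\bar\eta(-1)^{R_f/2}$ --- the case written $\varepsilon_f=-\bar\eta^{R_f/2}$ in the statement --- and $k=(R_f-2)/2$, so $\dim H=(s-R_f)+\tfrac{R_f-2}{2}=s-\tfrac{R_f+2}{2}$. These are exactly the three values of $e_f$. The routine parts are the coordinate changes and the discriminant bookkeeping; the one genuinely load-bearing input is the isotropy of every non-degenerate form in at least three variables over $\mathbb{F}_p$, which is what forces $\dim f_{\mathrm{an}}\le 2$ and thereby splits the argument into precisely the three listed cases.
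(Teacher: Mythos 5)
Your argument is correct. Note that this paper does not prove the lemma at all --- it is quoted verbatim from \cite[Lemma 2.5]{HLL23} --- so there is no in-text proof to compare against; but your route (split off the radical $\F_q^{\perp_f}$ of dimension $s-R_f$, Witt-decompose the non-degenerate part as $\mathbb{H}^{\oplus k}\perp f_{\mathrm{an}}$ with $\dim f_{\mathrm{an}}\le 2$, take the totally singular $k$-space spanned by one isotropic vector from each hyperbolic plane, and read off $k$ from $R_f$ and $\varepsilon_f$ via the discriminant of $\mathbb{H}$ being $-1$) is exactly the standard computation of the Witt index of a quadratic form over $\F_p$ that underlies the cited result, and your case analysis, including reading the garbled third case as $\varepsilon_f=-\overline{\eta}(-1)^{R_f/2}$, is the intended one.
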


\begin{lemma}[{\cite[Lemma 2.6]{HLL23}}]\label{lem:2-2}
Let $f$ be a quadratic form over $\mathbb{F}_{q}$ with the rank $R_f$. For each $a\in \F_p^*$, there exists an $l_0$-dimensional subspace
$H$ of $ \mathbb{F}_{q}$ such that $R_H=1, \varepsilon_{H}=\bar{\eta}(a)$ and $H \cap\ \F_q^{\perp_f}=\{0\}$, where $$
l_0=\left\{\begin{array}{ll}
\frac{R_f-1}{2}, & \textrm{if $R_f$ is odd}, \\
\frac{R_f}{2}, & \textrm{if $R_f$ is even }.
\end{array}
\right.
$$
\end{lemma}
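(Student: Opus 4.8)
The plan is to construct $H$ directly as an orthogonal sum $\langle v_0\rangle\oplus W$, where $v_0$ is a single vector with $f(v_0)$ in the prescribed square class — this will force $R_H=1$ and $\varepsilon_H=\bar{\eta}(a)$ — and $W$ is a totally isotropic subspace of dimension $l_0-1$ that is $F$-orthogonal to $v_0$, chosen so that $\dim H$ comes out to be $l_0$ while the extra directions contribute neither to the rank nor to the sign of $f|_H$.

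First I would discard the radical. Diagonalizing $f$ over $\F_p$, I write $\F_q=V\oplus\F_q^{\perp_f}$ with $\dim V=R_f$, $f|_V$ non-degenerate of rank $R_f$, and $V\cap\F_q^{\perp_f}=\{0\}$; keeping $H\subseteq V$ then makes $H\cap\F_q^{\perp_f}=\{0\}$ automatic. I may assume $R_f\ge 2$ (for $R_f\le 1$ one has $l_0=0$, and no zero-dimensional subspace can satisfy $R_H=1$, so these cases fall outside the scope of the statement). Since a non-degenerate quadratic form over $\F_p$ in at least two variables represents every nonzero element, there is $v_0\in V$ with $f(v_0)=a$; note $F(v_0,v_0)=f(v_0)=a\neq 0$.

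Then I would set $U=\{x\in V:F(x,v_0)=0\}$. As $x\mapsto F(x,v_0)$ is a nonzero linear functional on $V$, we get $\dim U=R_f-1$ and an orthogonal splitting $V=\langle v_0\rangle\oplus U$, so $f|_U$ is non-degenerate of rank $R_f-1$. Applying Lemma~\ref{lem:2} to $f|_U$ (an ambient space of dimension $R_f-1$ with trivial radical) produces $W\subseteq U$ with $f|_W\equiv 0$ and $\dim W=e$; running through the three cases of that lemma with ambient dimension and rank both equal to $R_f-1$ shows $e\in\{l_0-1,l_0\}$ regardless of the parity of $R_f$ and the sign of $f|_U$, so in particular $e\ge l_0-1$, and after replacing $W$ by a subspace I may assume $\dim W=l_0-1$. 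Finally I put $H=\langle v_0\rangle\oplus W$: then $\dim H=l_0$, and for $h=cv_0+w$ the identity $f(x+y)=f(x)+f(y)+2F(x,y)$ gives $f(h)=c^2f(v_0)+2c\,F(v_0,w)+f(w)=a c^2$ because $w\in U$ and $f|_W\equiv 0$; hence $f|_H$ is the rank-one form $\textrm{diag}(a,0,\dots,0)$ in suitable coordinates, so $R_H=1$ and $\varepsilon_H=\bar{\eta}(a)$, while $H\subseteq V$ gives $H\cap\F_q^{\perp_f}=\{0\}$.

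The hard part will be the dimension bookkeeping in the construction of $W$: one has to verify that Lemma~\ref{lem:2}, applied to the $(R_f-1)$-dimensional non-degenerate form $f|_U$, always yields a totally isotropic subspace of dimension at least $l_0-1$ — this requires separating $R_f$ even from $R_f$ odd and, in the odd case, both possible signs of $f|_U$. The only other delicate point is the universality of $f|_V$ used to pick $v_0$, which is precisely what fails when $R_f\le 1$ and so pins down the (implicit) hypothesis of the lemma.
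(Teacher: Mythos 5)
Your construction is correct: splitting off the radical, picking an anisotropic vector $v_0\in V$ with $f(v_0)=a$ (universality of non-degenerate forms in $\ge 2$ variables), passing to $U=v_0^{\perp}\cap V$, and extracting a totally isotropic $W\subseteq U$ of dimension $l_0-1$ via Lemma~\ref{lem:2} does yield $H=\langle v_0\rangle\oplus W$ with Gram matrix $\mathrm{diag}(a,0,\dots,0)$, hence $R_H=1$, $\varepsilon_H=\bar{\eta}(a)$, and $H\cap\F_q^{\perp_f}=\{0\}$; your case check that Lemma~\ref{lem:2} applied to the $(R_f-1)$-dimensional non-degenerate form $f|_U$ always gives $e\ge l_0-1$ is also right. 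Note, however, that this paper does not prove the statement at all --- it imports it verbatim as Lemma~2.6 of \cite{HLL23} --- so there is no in-paper proof to compare against; your argument is a legitimate self-contained derivation from Lemma~\ref{lem:2}, and it is the natural one (the adjacent Remark sketches the same kind of rank-one-plus-isotropic construction for the odd-rank case via Proposition~2 of \cite{19LF17}). Two small points worth keeping: your observation that the statement implicitly requires $R_f\ge 2$ (for $R_f\le 1$ one has $l_0=0$, incompatible with $R_H=1$) is accurate and worth making explicit; and when you apply Lemma~\ref{lem:2} to $f|_U$ you are using it for a quadratic form on an abstract $\F_p$-space of dimension $R_f-1$ rather than on a field $\F_{q}$, which is harmless but deserves a sentence, e.g.\ by fixing an $\F_p$-linear identification of $U$ with $\F_p^{R_f-1}$.
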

\begin{remark} In fact, when $R_f$ is odd, by Proposition 2 \cite{19LF17}, we can construct an $\frac{R_f+1}{2}$-dimensional subspace $H$ of $\F_q$ such that $R_H=1, \varepsilon_{H}=\overline{\eta}(-1)^{\frac{R_f-1}{2}}\varepsilon_{f}$ and $H \cap\ \F_q^{\perp_f}=\{0\}$, which concludes that $|H\cap D_{\beta}|=2p^{\frac{R_f-1}{2}}$ for any $a\in \F_p^*$ and
$\overline{\eta}(a)=\overline{\eta}(-1)^{\frac{R_f-1}{2}}\varepsilon_{f}$ by Lemma \ref{lem:1}.

\end{remark}

Let $f$ be a quadratic form over $\mathbb{F}_{q}$ defined by eq.\eqref{eq:f}. For any $x,y \in\F_q$, there exists a linearized polynomial $L_f$ over $\mathbb{F}_q$ such that $f(x)=\Tr^s(xL_f(x))$ and
\begin{equation}\label{eq:F} F(x,y)=\Tr^s\Big(xL_{f}(y)\Big)=\Tr^s\Big(yL_{f}(x)\Big). \end{equation}
Let $S_f=\mathrm{Im}(L_{f})=\Big\{L_{f}(x):x\in\F_q\Big\},\ \Ker(L_{f})=\Big\{x\in\F_q:L_{f}(x)=0\Big\}$ denote the image and the kernel of $L_{f}$, respectively.
If $b\in S_f$, we denote $x_{b}\in \mathbb{F}_q$ with $L_{f}(x_{b})=-\frac{b}{2}$. For more details, one can refer to \cite[Quadratic form]{HLL23}.

From eq. \eqref{eq:F}, we have
\begin{equation*}
\Ker(L_{f})=\{x\in \mathbb{F}_q:f(x+y)=f(x)+f(y), \textrm{for any}\  y\in\mathbb{F}_q\}=\mathbb{F}_q^{\perp_f}
\end{equation*}
and $\textrm {rank}\ L_f = R_f$.

The following lemmas will play an important role in settling the weight hierarchies.

\begin{lemma}[{\cite[Lemma 5]{19TXF17}}]\label{lem:6}
 Let the symbols and notation be as above and $f$ be defined in \eqref{eq:f} and $b\in \mathbb{F}_{q}$. Then
%\begin{enumerate}
%\item[(1)] \ \ $\sum\limits_{x\in \mathbb{F}_{q}}\zeta_{p}^{f(x)}=\varepsilon_{f}(p^{\ast})^{\frac{R_{f}}{2}}p^{e-R_f}$.
%\item[(2)] \ \
$$
\sum\limits_{x\in \mathbb{F}_{q}}\zeta_{p}^{f(x)-\Tr^s(bx)}
=\left\{\begin{array}{ll}
0, & \textrm{if\ } b\notin S_f, \\
\varepsilon_{f}(p^{\ast})^{\frac{R_{f}}{2}}p^{s-R_f}\zeta_{p}^{-f(x_{b})},  & \textrm{if\ } b\in S_f.
\end{array}
\right.
$$
%\end{enumerate}
where $x_{b}$ satisfies $L_{f}(x_{b})=-\frac{b}{2}$.
\end{lemma}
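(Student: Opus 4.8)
The plan is to separate two cases according to whether $b\in S_f=\Im(L_f)$. For $b\notin S_f$ I would show the sum vanishes by translating along the kernel $\mathbb{F}_q^{\perp_f}=\Ker(L_f)$; for $b\in S_f$ I would complete the square to reduce to the homogeneous sum $\sum_{z\in\mathbb{F}_q}\zeta_p^{f(z)}$, and then evaluate that by diagonalizing $f$ and invoking the classical quadratic Gauss sum. The structural input making this dichotomy the right one is that, $F(x,y)=\Tr^s(xL_f(y))$ being symmetric, $L_f$ is self-adjoint for the nondegenerate trace form, so $S_f=\Im(L_f)$ is exactly the $\Tr$-orthogonal complement of $\Ker(L_f)=\mathbb{F}_q^{\perp_f}$; equivalently, $b\in S_f$ iff $\Tr^s(bz)=0$ for all $z\in\mathbb{F}_q^{\perp_f}$.

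For $b\notin S_f$, I would fix $z_0\in\mathbb{F}_q^{\perp_f}$ with $\Tr^s(bz_0)\neq 0$. Since $z_0\in\mathbb{F}_q^{\perp_f}$ we get $F(x,z_0)=0$ for all $x$ and $f(z_0)=\Tr^s(z_0L_f(z_0))=0$, so the defining identity $f(x+y)=f(x)+f(y)+2F(x,y)$ gives $f(x+tz_0)=f(x)$ for every $t\in\mathbb{F}_p$. Re-indexing $\sum_x\zeta_p^{f(x)-\Tr^s(bx)}$ by $x\mapsto x-tz_0$ then multiplies it by $\zeta_p^{t\Tr^s(bz_0)}$, and choosing $t$ with $t\Tr^s(bz_0)\not\equiv 0\pmod p$ forces the sum to be $0$.

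For $b\in S_f$, I would choose $x_b$ with $L_f(x_b)=-b/2$ (the value $f(x_b)$ is independent of the choice, as two choices differ by an element of $\mathbb{F}_q^{\perp_f}$ on which $f$ vanishes). Then $b=-2L_f(x_b)$ gives $\Tr^s(bx)=-2\Tr^s(xL_f(x_b))=-2F(x,x_b)$, and the defining identity yields $f(x)-\Tr^s(bx)=f(x)+2F(x,x_b)=f(x+x_b)-f(x_b)$; substituting $z=x+x_b$,
$$\sum_{x\in\mathbb{F}_q}\zeta_p^{f(x)-\Tr^s(bx)}=\zeta_p^{-f(x_b)}\sum_{z\in\mathbb{F}_q}\zeta_p^{f(z)}.$$
To finish I would evaluate $\sum_z\zeta_p^{f(z)}$: writing $f(z)=ZAZ^T$ in coordinates and picking an invertible $M$ over $\mathbb{F}_p$ with $MAM^T=\textrm{diag}(\lambda_1,\dots,\lambda_{R_f},0,\dots,0)$, the substitution $Z=YM$ turns $f$ into $\sum_{i=1}^{R_f}\lambda_iy_i^2$, so the sum factors as $p^{s-R_f}\prod_{i=1}^{R_f}\sum_{y\in\mathbb{F}_p}\zeta_p^{\lambda_iy^2}$. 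Using the classical quadratic Gauss sum $\sum_{y\in\mathbb{F}_p}\zeta_p^{\lambda y^2}=\bar{\eta}(\lambda)\sqrt{p^*}$, with the branch of $\sqrt{p^*}$ normalized by $\sigma_z(\sqrt{p^*})=\bar{\eta}(z)\sqrt{p^*}$, this equals $p^{s-R_f}\bar{\eta}(\lambda_1\cdots\lambda_{R_f})(p^*)^{R_f/2}=\varepsilon_f(p^*)^{R_f/2}p^{s-R_f}$, and combining with the display gives the formula.

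I do not anticipate a genuine obstacle here; this is the standard Weil-sum computation for a quadratic form, recast in the trace-form language of the paper. The points that need care are the self-adjointness of $L_f$ (which identifies $S_f$ with $(\mathbb{F}_q^{\perp_f})^{\perp_{\Tr}}$ and thereby pins down the $b\notin S_f$ case), the well-definedness of $\zeta_p^{-f(x_b)}$, and selecting the correct sign in the quadratic Gauss sum so that it agrees with $\sqrt{p^*}$ as normalized above (recall $p^*=(-1)^{(p-1)/2}p$); none of these is difficult.
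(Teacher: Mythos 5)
Your proposal is correct. Note that the paper does not prove this lemma at all: it is imported verbatim as Lemma 5 of Tang--Xiang--Feng \cite{19TXF17}, so there is no in-paper argument to compare against. What you give is the standard proof of that cited result, and every step checks out: the self-adjointness $\Tr^s(xL_f(y))=\Tr^s(yL_f(x))$ does identify $S_f$ with the trace-orthogonal complement of $\Ker(L_f)$ (inclusion plus a dimension count), the translation by $z_0\in\Ker(L_f)$ with $\Tr^s(bz_0)\neq 0$ kills the sum when $b\notin S_f$, completing the square via $f(x)-\Tr^s(bx)=f(x+x_b)-f(x_b)$ is valid and $f(x_b)$ is well defined modulo $\Ker(L_f)$, and the diagonalization reduces the remaining sum to $\bar{\eta}(\Delta_f)\,(\sqrt{p^*})^{R_f}p^{s-R_f}=\varepsilon_f(p^*)^{R_f/2}p^{s-R_f}$ once the classical evaluation $\sum_{y\in\mathbb{F}_p}\zeta_p^{y^2}=\sqrt{p^*}$ (principal branch) is invoked. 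The only point worth flagging is cosmetic: the normalization $\sigma_z(\sqrt{p^*})=\bar{\eta}(z)\sqrt{p^*}$ fixes $\sqrt{p^*}$ only up to sign, so the branch is really pinned down by Gauss's sign determination rather than by that Galois-equivariance condition; you implicitly acknowledge this, and it does not affect the argument.
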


%\begin{lemma}\label{lem:8}
% Let the symbols and notation be as above and $f$ be defined in \eqref{eq:f} and $x\in \mathbb{F}_{q}$. Then
%\begin{equation}
%\sum\limits_{b\in S_f}\zeta_{p}^{f(x_b)+\Tr^s(bx)}=\varepsilon_{f}(p^{\ast})^{\frac{R_{f}}{2}}\zeta_{p}^{-f(x)},
%\end{equation}
%where $x_{b}$ satisfies $L_{f}(x_{b})=-\frac{b}{2}$.
%\end{lemma}

\begin{lemma}\label{lem:0}
Let $f$ be a quadratic form over $\mathbb{F}_{q}$ and $H$ be an $r$-dimensional subspace of $ \mathbb{F}_{q}$ with the rank $R_H$ and sign $\varepsilon_{H}$. We have
$$\sum\limits_{x\in H}\zeta_{p}^{f(x)}
=\varepsilon_{H}(p^{\ast})^{\frac{R_{H}}{2}}p^{r-R_H}.
$$
\end{lemma}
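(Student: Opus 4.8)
\emph{Proof idea.} The plan is to split the sum over $H$ according to the value of the quadratic form and then feed the result into Lemma~\ref{lem:1}. Since $\F_q=\bigsqcup_{\beta\in\F_p}D_{\beta}$, intersecting with $H$ gives $H=\bigsqcup_{\beta\in\F_p}(H\cap D_{\beta})$, hence
$$\sum_{x\in H}\zeta_p^{f(x)}=\sum_{\beta\in\F_p}|H\cap D_{\beta}|\,\zeta_p^{\beta}.$$
Now I would substitute the two expressions for $|H\cap D_{\beta}|$ supplied by Lemma~\ref{lem:1}, handling $R_H$ even and $R_H$ odd separately. In either case the constant part $p^{r-1}$ of $|H\cap D_{\beta}|$ contributes $p^{r-1}\sum_{\beta\in\F_p}\zeta_p^{\beta}=0$, so only the term carrying the factor $\varepsilon_H$ survives.

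For $R_H$ even, the surviving term is $p^{r-1}\varepsilon_H(p^*)^{-R_H/2}\sum_{\beta\in\F_p}\upsilon(\beta)\zeta_p^{\beta}$, and $\sum_{\beta\in\F_p}\upsilon(\beta)\zeta_p^{\beta}=(p-1)-\sum_{\beta\in\F_p^*}\zeta_p^{\beta}=p$; since $(p^*)^{R_H}=p^{R_H}$ when $R_H$ is even, this equals $p^{r-R_H}\varepsilon_H(p^*)^{R_H/2}$. For $R_H$ odd, the surviving term is $p^{r-1}\varepsilon_H(p^*)^{-(R_H-1)/2}\sum_{\beta\in\F_p}\bar\eta(\beta)\zeta_p^{\beta}$, where $\sum_{\beta\in\F_p}\bar\eta(\beta)\zeta_p^{\beta}=\sqrt{p^*}$ is the quadratic Gauss sum (consistent with the sign conventions fixed in Section~2); a short manipulation using $p^*=(-1)^{(p-1)/2}p$ and $(p^*)^{R_H}=(-1)^{(p-1)/2}p^{R_H}$ for odd $R_H$ rewrites this as $p^{r-R_H}\varepsilon_H(p^*)^{R_H/2}$, where $(p^*)^{R_H/2}$ stands for $(\sqrt{p^*})^{R_H}$. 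This yields the claimed formula in all cases, including the trivial case $r=R_H=0$ where both sides equal $1$.

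There is no deep obstacle; the only points that need care are the bookkeeping with the half-integer powers of $p^*$ when $R_H$ is odd — these are powers of the fixed Gauss sum $\sqrt{p^*}$, and the sign $\varepsilon_H=\bar\eta(\Delta_{f|_H})$ precisely absorbs the square-root ambiguity — together with the two elementary arithmetic identities relating $(p^*)^{\pm R_H/2}$ to powers of $p$. Should one prefer a self-contained argument that does not invoke Lemma~\ref{lem:1}, an equally short route is to fix an $\F_p$-basis of $H$, write $f|_H=YBY^T$ with $B$ symmetric of rank $R_H$, diagonalize $NBN^T=\textrm{diag}(\lambda_1,\dots,\lambda_{R_H},0,\dots,0)$ by an invertible substitution over $\F_p$, and factor the sum as $p^{r-R_H}\prod_{i=1}^{R_H}\big(\sum_{z\in\F_p}\zeta_p^{\lambda_i z^2}\big)$; the single-variable evaluation $\sum_{z\in\F_p}\zeta_p^{\lambda z^2}=\bar\eta(\lambda)\sqrt{p^*}$ together with $\prod_{i=1}^{R_H}\bar\eta(\lambda_i)=\varepsilon_H$ then finishes the proof. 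Incidentally, this second approach specializes to the $b=0$ case of Lemma~\ref{lem:6} when $H=\F_q$.
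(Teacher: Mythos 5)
Your main argument is correct and is essentially the paper's own proof: partition $H$ into the level sets $H\cap D_{\beta}$, apply Lemma~\ref{lem:1}, and reduce everything to the quadratic Gauss sum $\sum_{\beta\in\F_p}\bar{\eta}(\beta)\zeta_p^{\beta}=\sqrt{p^{\ast}}$ (the paper phrases this step via $\sigma_z$ acting on $\sum_{x\in\F_p}\zeta_p^{x^2}$, which is the same computation). You additionally write out the even-rank case, which the paper omits as analogous, and your self-contained diagonalization alternative is a valid standard second route; neither addition changes the substance.
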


\begin{proof} We only prove the case of $R_{H}$ odd. Choosing a non-square element $z\in\F_{p}^*$.
By Lemma \ref{lem:1}, we have \begin{align*}
    \sum\limits_{x\in H}\zeta_{p}^{f(x)}&= \sum\limits_{\beta\in \F_p}\sum\limits_{x\in H\cap D_{\beta}}\zeta_{p}^{f(x)}\\
    &=\sum\limits_{\beta\in \F_p}|H\cap D_{\beta}|\zeta_{p}^{\beta}\\
&=|H\cap D_{0}|+\frac{1}{2}|H\cap D_{1}|\sum\limits_{x\in \F_p^*}\zeta_{p}^{x^2}+\frac{1}{2}|H\cap D_{z}|\sigma_{z}(\sum\limits_{x\in \F_p^*}\zeta_{p}^{x^2})\\
&=|H\cap D_{0}|+\frac{1}{2}|H\cap D_{1}|(\sqrt{p^{\ast}}-1)+\frac{1}{2}|H\cap D_{z}|(-\sqrt{p^{\ast}}-1)\\
&=\varepsilon_{H}(p^{\ast})^{\frac{R_{H}}{2}}p^{r-R_H},
\end{align*}
where the last second equality comes from $\sum\limits_{x\in \F_p}\zeta_{p}^{x^2}=\sqrt{p^{\ast}}$ by Lemma \ref{lem:6}.
\end{proof}

%\subsection{Several bounds }

%Finally, we introduce several bounds on GHWs of linear codes.

%\begin{lemma}[{\cite{TV95}}]\label{lem:10} Let $C$ be an $[n,k]$ linear code over $\F_p$. For $1\leq r \leq k$,

%\begin{itemize}

  %   \item [(1)]Singleton type bound:
%$$ r \leq d_r(C) \leq n-k+r,$$

 %    \item [(2)]Plotkin-like bound:

%$$ d_r(C) \leq \lfloor\frac{n(p^r-1)p^{k-r}}{p^k-1}\rfloor,$$

  %   \item [(3)] Griesmer-like bound:

%$$ d_r(C) \geq \sum\limits_{i=0}^{r-1}\lceil\frac{d_1(C)}{p^i}\rceil,$$
%   \end{itemize}

%\end{lemma}

\section{The weight hierarchy of the presented linear code}

In this subsection, we give the weight hierarchy of $C_{D}$ in \eqref{defcode1}.

The following Lemma \ref{thm:wd} gives the the weight distribution of the code $C_{D}$ be defined in \eqref{defcode1}.

\begin{lemma}\label{thm:wd} Let $D$ be defined in \eqref{set:D1} and the code $C_{D}$ be defined in \eqref{defcode1}. Denote $s=s_1+s_2, R=R_f+R_g,\varepsilon=\varepsilon_{f}\varepsilon_{g}$.
Then the code $C_{D}$ is an $[n,s]$ linear code over $ \mathbb{F}_{p} $
with the weight distribution in Tables 1 and 2, where
$$n=\left\{\begin{array}{ll}
p^{s-1}-1, & \textrm{if\ } 2\nmid R  , \\
p^{s-1}-1+(p-1)p^{s-1}\varepsilon_{f}\varepsilon_{g}(p^{\ast})^{-\frac{R}{2}},  & \textrm{if\ } 2\mid R  .
\end{array}
\right.$$
\begin{table}\label{tab:wd:o}
\centering
\caption{The weight distribution of $C_{\D}$ of Lemma \ref{thm:wd} when $R$ is odd}
\begin{tabular*}{10.5cm}{@{\extracolsep{\fill}}ll}
\hline
% after \\: \hline or \cline{col1-col2} \cline{col3-col4} ...
\textrm{Weight} $\omega$ & \textrm{Multiplicity} $A_\omega$   \\
\hline
0 &   1  \\
$(p-1)p^{s-2}$ &  $p^{s}-p^{R}+p^{R-1}-1$  \\
$(p-1)p^{s-2}(1-\varepsilon(p^*)^{-\frac{R-1}{2}})$  & $\frac{1}{2}(p-1)p^{R-1}(1+\varepsilon(p^*)^{-\frac{R-1}{2}})$  \\
$(p-1)p^{s-2}(1+\varepsilon(p^*)^{-\frac{R-1}{2}})$  & $\frac{1}{2}(p-1)p^{R-1}(1-\varepsilon(p^*)^{-\frac{R-1}{2}})$  \\
\hline
\end{tabular*}
\end{table}
\begin{table}
\centering
\caption{The weight distribution of $C_{D}$ of Lemma \ref{thm:wd} when $R$ is even}
\begin{tabular*}{10.5cm}{@{\extracolsep{\fill}}ll}
\hline
% after \\: \hline or \cline{col1-col2} \cline{col3-col4} ...
\textrm{Weight} $\omega$ \qquad& \textrm{Multiplicity} $A_\omega$   \\
\hline
0 \qquad&   1  \\
$(p-1)p^{s-2}$ \qquad&  $p^{R-1}-1+\varepsilon(1-p^{-1})(p^*)^{\frac{R}{2}}$  \\
$(p-1)p^{s-2}(1+\varepsilon(p-1)(p^*)^{-\frac{R}{2}})$  \qquad& $ p^{s}-p^{R}$  \\
$(p-1)p^{s-2}(1+\varepsilon p(p^*)^{-\frac{R}{2}})$  \qquad& $(1-p^{-1})p^{R}(1-\varepsilon(p^*)^{-\frac{R}{2}})$  \\
\hline
\end{tabular*}
\end{table}
\end{lemma}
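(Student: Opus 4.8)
The argument is a standard, if somewhat lengthy, exponential-sum computation. I would first record the length. From
$$
n=|D|=\frac1p\sum_{z\in\F_p}\sum_{(x,y)\in\F^\star}\zeta_p^{\,z(f(x)+g(y))},
$$
the term $z=0$ gives $p^{s}-1$, and for $z\in\F_p^{\ast}$ one adds back the point $(0,0)$ and factors the inner sum as $\big(\sum_{u\in\F_{q_1}}\zeta_p^{\,zf(u)}\big)\big(\sum_{v\in\F_{q_2}}\zeta_p^{\,zg(v)}\big)$; each factor is evaluated by Lemma~\ref{lem:0} applied to the quadratic forms $zf$ and $zg$, which have ranks $R_f,R_g$ and signs $\bar\eta(z)^{R_f}\varepsilon_f$ and $\bar\eta(z)^{R_g}\varepsilon_g$. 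Summing over $z$, using $\sum_{z\in\F_p^{\ast}}\bar\eta(z)^R=0$ for odd $R$ and $=p-1$ for even $R$, and $(p^{\ast})^{R}=p^{R}$ for even $R$, yields the stated $n$. Next, for $(x,y)\in\F$ the codeword $c_{(x,y)}=\big(\Tr^{s_1}(xx_i)+\Tr^{s_2}(yy_i)\big)_{(x_i,y_i)\in D}$ has Hamming weight $n-N_{(x,y)}$ with $N_{(x,y)}=|\{(u,v)\in D:\Tr^{s_1}(xu)+\Tr^{s_2}(yv)=0\}|$. Writing the condition $f(u)+g(v)=0$ and the linear condition each as a character average, adding back $(u,v)=(0,0)$, and factoring, one finds for $(x,y)\neq(0,0)$ that
$$
N_{(x,y)}=p^{s-2}-1+\frac1{p^2}\sum_{z\in\F_p^{\ast}}\sum_{w\in\F_p}\Big(\sum_{u\in\F_{q_1}}\zeta_p^{\,zf(u)+w\Tr^{s_1}(xu)}\Big)\Big(\sum_{v\in\F_{q_2}}\zeta_p^{\,zg(v)+w\Tr^{s_2}(yv)}\Big),
$$
where the $z=0$ part has been extracted (it gives $p^{s-2}$) and the term $w=0$ of the double sum coincides with the sum over $z\in\F_p^{\ast}$ already met in the length computation.

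The heart of the proof is the $z\neq0$, $w\neq0$ part. By Lemma~\ref{lem:6} applied to the form $zf$, the $\F_{q_1}$-sum is $0$ unless $x\in S_f$, and otherwise equals $\bar\eta(z)^{R_f}\varepsilon_f(p^{\ast})^{R_f/2}p^{\,s_1-R_f}\zeta_p^{-(w^2/z)f(x_x)}$, where $L_f(x_x)=-x/2$: the distinguished preimage of $-wx$ for the form $zf$ may be taken to be $-\tfrac{w}{z}\,x_x$, whose $zf$-value is $\tfrac{w^2}{z}f(x_x)$. The same applies to $g$. Hence this part vanishes unless $x\in S_f$ and $y\in S_g$, in which case it equals
$$
\varepsilon(p^{\ast})^{R/2}p^{\,s-R}\sum_{z\in\F_p^{\ast}}\bar\eta(z)^R\sum_{w\in\F_p^{\ast}}\zeta_p^{-(w^2/z)\gamma},\qquad\gamma:=f(x_x)+g(y_y)\in\F_p.
$$
Using $\sum_{w\in\F_p^{\ast}}\zeta_p^{\,cw^2}=-1+\bar\eta(c)\sqrt{p^{\ast}}$ for $c\neq0$ and the substitution $z\mapsto z^{-1}$, I would evaluate this according to whether $\gamma=0$, $\bar\eta(\gamma)=1$, or $\bar\eta(\gamma)=-1$. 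Putting the pieces together gives $N_{(x,y)}$ and hence $\wt(c_{(x,y)})=n-N_{(x,y)}$; the simplifications $(p^{\ast})^{(R+1)/2}p^{-R}=\bar\eta(-1)(p^{\ast})^{-(R-1)/2}$ for odd $R$ and $(p^{\ast})^{R/2}p^{-1}=p^{R-1}(p^{\ast})^{-R/2}$ for even $R$ turn the answers into exactly the weights of Tables~1 and~2.

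For the multiplicities: there are $p^{R}$ pairs $(x,y)$ with $x\in S_f$ and $y\in S_g$, and $p^{s}-p^{R}$ in the complement (these last give the ``generic'' weight $(p-1)p^{s-2}$ for odd $R$, resp. $(p-1)p^{s-2}(1+\varepsilon(p-1)(p^{\ast})^{-R/2})$ for even $R$). For $(x,y)\in S_f\times S_g$ the scalar $\gamma$ is well defined: replacing $x_x$ by $x_x+k$ with $k\in\F_{q_1}^{\perp_f}$ changes $f(x_x)$ by $-\Tr^{s_1}(kx)$, which is $\Tr^{s_1}(uL_f(k))=0$ when $x=L_f(u)\in S_f$; moreover the fibres of $(u,v)\mapsto\big(-2L_f(u),-2L_g(v)\big)$ over $S_f\times S_g$ all have size $p^{s-R}$ and $\gamma\big(-2L_f(u),-2L_g(v)\big)=f(u)+g(v)$. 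Thus the number of $(x,y)\in S_f\times S_g$ with $\gamma=\beta$ equals $p^{\,R-s}\,|\{(u,v)\in\F:f(u)+g(v)=\beta\}|$, which by Lemma~\ref{lem:1} (applied to $f\oplus g$ on $\F\cong\F_p^{s}$, of rank $R$ and sign $\varepsilon$) is $p^{R-1}\big(1+\bar\eta(\beta)\varepsilon(p^{\ast})^{-(R-1)/2}\big)$ for odd $R$ and $p^{R-1}\big(1+\upsilon(\beta)\varepsilon(p^{\ast})^{-R/2}\big)$ for even $R$. Matching each weight to its set of pairs — and deleting $(0,0)$ from the $\beta=0$ case, as it produces the all-zero codeword — reproduces the multiplicities in the tables; in particular $A_0=1$, so $(x,y)\mapsto c_{(x,y)}$ is injective and $\dim C_D=s$.

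The main obstacle is the double character sum over $z$ and $w$ in the case $x\in S_f$, $y\in S_g$: one must carry out the homogeneity bookkeeping for the auxiliary form $zf$ correctly and keep close track of the parity-dependent factors $(p^{\ast})^{R/2}$ versus $p^{R/2}$ so that all three sub-cases for $\gamma$ collapse to the compact table entries. Everything else — the length, the reduction to $N_{(x,y)}$, the well-definedness of $\gamma$ on $S_f\times S_g$, and the final count — is then routine bookkeeping.
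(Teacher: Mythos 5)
Your computation is correct in outline and, as far as I can check, in every detail: the length formula, the reduction of $\wt(c_{(x,y)})$ to the counting function $N_{(x,y)}$, the evaluation of the $z\neq 0,w\neq 0$ double sum via Lemma~\ref{lem:6} applied to the forms $zf,zg$ (with the correct twisted sign $\bar\eta(z)^{R_f}\varepsilon_f$ and the correct distinguished preimage $-\tfrac{w}{z}x_x$, giving exponent $\tfrac{w^2}{z}f(x_x)$), and the fibre-counting step that converts the distribution of $\gamma=f(x_u)+g(y_v)$ on $S_f\times S_g$ into an application of Lemma~\ref{lem:1} to the direct-sum form of rank $R$ and sign $\varepsilon$, all reproduce Tables 1 and 2. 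Be aware, though, that the paper does not actually prove this lemma: in the remark following it the authors note that a $p$-ary quadratic function is a weakly regular unbalanced plateaued function and cite S{\i}nak \cite{S22}, where the weight distributions of codes from two such functions are determined, and they explicitly omit the proof. So your route is genuinely different in presentation --- a self-contained Weil-sum computation using only Lemmas~\ref{lem:0}, \ref{lem:6} and \ref{lem:1} from Section~2, versus the paper's outsourcing of the whole calculation to the plateaued-function literature; the underlying character-sum manipulations are of course the same in spirit, but your version has the advantage of not importing external machinery. One small caveat on your last step: ``$A_0=1$, hence $\dim C_D=s$'' requires that none of the tabulated nonzero weights vanishes; this holds for $R\geq 3$ (since then $|(p^*)^{-\frac{R-1}{2}}|\leq p^{-1}$ and $|p(p^*)^{-\frac{R}{2}}|\leq p^{-1}$), but fails for instance when $R=2$ and $\varepsilon=-\bar\eta(-1)$, or $R=1$ and $\varepsilon=1$. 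This is an implicit hypothesis of the lemma itself (consistent with the paper's later restriction to $R\geq 3$), but your argument should state it.
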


\begin{remark} By Lemma \ref{lem:6}, it's easy to check that each $p$-ary quadratic function $f$ over $\F_{p^s}$ is weakly regular unbalance $(s-R_f)$-plateaued function with index $2$ and A. S{\i}nak \cite{S22} studied linear codes derided from two weakly regular unbalance plateaued function and determined their weight distributions. So, we omit the proof of Lemma \ref{thm:wd}.
\end{remark}

We restrict $p$-ary quadratic function $f$ over $\F_q$ to subspaces and quotient space $\F_{q}\Big/\F_{q}^{\perp_f}$ of $\F_q$, for more details one can refer to \cite{19LF17,LL22}, and obtain the following interesting results. %Furthermore, Lemma \ref{lem:main} ensures the correctness of the proof of Theorem \ref{thm:wh:e1}, Theorem \ref{thm:wh:e2} and Theorem \ref{thm:wh:o}.

\begin{lemma}\label{lem:main}
Let $f(x),g(y)$ be quadratic forms over $\mathbb{F}_{q_i},i=1,2$ defined in \eqref{eq:f} respectively. For $(u,v)\in \mathbb{F}$ and $t\in \F_p$ and $H_r=L_f(\overline{H}_{r_1})\times L_g(\overline{H}_{r_2})$, where $\overline{H}_{r_1}$ (resp $\overline{H}_{r_2}$) is an $r_1$ (resp $r_2$) - dimensional subspace of $\F_{q_1}\Big/\F_{q_1}^{\perp_f}$ (resp $\F_{q_2}\Big/\F_{q_2}^{\perp_g}$) and $r=r_1+r_2$, define $S_{fg,H_r}(t)=\#\{(u,v)\in H_r:f(x_u)+g(y_v)=t\}$. Write $ \varepsilon_{f,1}=\varepsilon_{f,H_{r_1}}, \varepsilon_{g,2}=\varepsilon_{g,H_{r_2}}, R_{f,1}=R_{f,H_{r_1}}, R_{g,2}=R_{g,H_{r_2}}, R_{H_r}=R_{f,1}+R_{g,2}$. Then, we have
the following.

  (1) When $R_{H_r}$ is odd,
  \begin{equation}
  S_{fg,H_r}(t)=p^{r-1}\Big(1 + \varepsilon_{f,1}\varepsilon_{g,2}(p^{\ast})^{-\frac{R_{H_r}-1}{2}}\bar{\eta}(t)\Big).
  \end{equation}

  (2) When $R_{H_r}$ is even,
  \begin{equation}
  S_{fg,H_r}(t)=p^{r-1}\Big(1 + \varepsilon_{f,1}\varepsilon_{g,2}(p^{\ast})^{-\frac{R_{H_r}}{2}}\upsilon(t)\Big).
  \end{equation}

Here $x_{u}, y_v$ satisfy $L_{f}(x_{u})=-\frac{u}{2}, L_g(y_v)=-\frac{v}{2}$.
\end{lemma}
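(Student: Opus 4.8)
\noindent\emph{Proof proposal.} The plan is to recognize $S_{fg,H_r}(t)$ as the number of zeros of a single quadratic form on an $r$-dimensional $\mathbb{F}_p$-space and then read the answer off from Lemma \ref{lem:1}. The first step is to pass to the quotient $\mathbb{F}_{q_1}/\mathbb{F}_{q_1}^{\perp_f}$. Because $\mathbb{F}_{q_1}^{\perp_f}=\Ker(L_f)$ and, for $x\in\Ker(L_f)$, the identity $f(x+y)=f(x)+f(y)$ taken at $y=x$ forces $4f(x)=2f(x)$, i.e.\ $f(x)=0$ (as $p$ is odd), the form $f$ vanishes on $\mathbb{F}_{q_1}^{\perp_f}$ and hence descends to a non-degenerate quadratic form $\bar{f}$ on $\mathbb{F}_{q_1}/\mathbb{F}_{q_1}^{\perp_f}$; moreover $L_f$ factors through this quotient and induces an $\mathbb{F}_p$-linear isomorphism onto $S_f$. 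Now fix $u\in L_f(\overline{H}_{r_1})$ and choose a representative $a$ of the coset $\bar{a}\in\overline{H}_{r_1}$ with $u=L_f(a)$; from $L_f(x_u)=-u/2=L_f(-a/2)$ we get $x_u+a/2\in\mathbb{F}_{q_1}^{\perp_f}$, whence
\[ f(x_u)=f(-a/2)=\tfrac14 f(a)=\tfrac14\bar{f}(\bar{a}). \]
So, transported through the isomorphism $L_f\colon\overline{H}_{r_1}\to L_f(\overline{H}_{r_1})$, the map $u\mapsto f(x_u)$ is exactly $\tfrac14\bar{f}$ restricted to $\overline{H}_{r_1}$; since $\tfrac14$ is a square in $\mathbb{F}_p$ it is a quadratic form of rank $R_{f,1}=R_{f,H_{r_1}}$ and sign $\varepsilon_{f,1}=\varepsilon_{f,H_{r_1}}$ on a space of dimension $r_1$. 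The same argument with $g,\,y_v,\,\overline{H}_{r_2}$ in place of $f,\,x_u,\,\overline{H}_{r_1}$ shows $v\mapsto g(y_v)$ is a quadratic form on $L_g(\overline{H}_{r_2})$ of rank $R_{g,2}$ and sign $\varepsilon_{g,2}$.

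The second step is to combine the two forms. On $H_r=L_f(\overline{H}_{r_1})\times L_g(\overline{H}_{r_2})$ the function $Q(u,v):=f(x_u)+g(y_v)$ is the orthogonal direct sum of the two forms above; since under a direct sum ranks add and the defining determinants (hence, by multiplicativity of $\bar{\eta}$, the signs) multiply, $Q|_{H_r}$ is a quadratic form in $r=r_1+r_2$ variables of rank $R_{f,1}+R_{g,2}=R_{H_r}$ and sign $\varepsilon_{f,1}\varepsilon_{g,2}$. Consequently $S_{fg,H_r}(t)=\#\{w\in H_r:Q(w)=t\}$ is precisely the quantity evaluated by Lemma \ref{lem:1} for the form $Q|_{H_r}$ on the $r$-dimensional space $H_r$ with $\beta=t$; inserting $R_{H_r}$ and $\varepsilon_{f,1}\varepsilon_{g,2}$ into its two branches yields formula (1) when $R_{H_r}$ is odd and formula (2) when $R_{H_r}$ is even, while the trivial case $r=0$ (so $R_{H_r}=0$), in which $S_{fg,H_0}(t)$ equals $1$ for $t=0$ and $0$ otherwise, agrees with (2). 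Alternatively one may bypass Lemma \ref{lem:1} and compute $S_{fg,H_r}(t)=\tfrac1p\sum_{z\in\mathbb{F}_p}\zeta_p^{-zt}\sum_{(u,v)\in H_r}\zeta_p^{z(f(x_u)+g(y_v))}$ by factoring the inner sum over $u$ and over $v$, evaluating each factor by Lemma \ref{lem:0} via $f(x_u)=\tfrac14\bar{f}(\bar{a})$ and $\bar{\eta}(z/4)=\bar{\eta}(z)$, and then summing over $z\in\mathbb{F}_p^{\ast}$ with Lemma \ref{lem:7}, whose two cases match the two parities of $R_{H_r}$.

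The step I expect to be the main obstacle is the first one: checking that $u\mapsto f(x_u)$ genuinely is a quadratic form on $L_f(\overline{H}_{r_1})$ carrying the invariants $R_{f,H_{r_1}}$ and $\varepsilon_{f,H_{r_1}}$. This rests on three facts---that $f$ vanishes on $\mathbb{F}_{q_1}^{\perp_f}$, that it therefore descends to $\mathbb{F}_{q_1}/\mathbb{F}_{q_1}^{\perp_f}$, and that $L_f$ identifies $\overline{H}_{r_1}$ with $L_f(\overline{H}_{r_1})$ while preserving the rank and sign of the rescaled restricted form---after which the whole problem collapses to the single quadratic-form count of Lemma \ref{lem:1}. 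In the character-sum variant the only further delicacy is the arithmetic with $p^{\ast}=\bar{\eta}(-1)p$ when $R_{H_r}$ is odd: the $\bar{\eta}(-1)$ produced by Lemma \ref{lem:7} must be seen to cancel the $\bar{\eta}(-1)$ that surfaces when a power of $p$ is converted to a power of $p^{\ast}$.
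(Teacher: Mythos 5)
Your proposal is correct, and its main route differs from the paper's. The paper proves Lemma \ref{lem:main} purely with exponential sums: it writes $S_{fg,H_r}(t)=\frac{1}{p}\sum_{z\in\F_p}\zeta_p^{-zt}\sum_{(u,v)\in H_r}\zeta_p^{z(f(x_u)+g(y_v))}$, factors the inner sum over $\overline{H}_{r_1}$ and $\overline{H}_{r_2}$, evaluates each factor by Lemma \ref{lem:0}, and finishes with the Galois-sum identity of Lemma \ref{lem:7} --- i.e.\ exactly your ``alternative'' paragraph. Your primary argument instead identifies $(u,v)\mapsto f(x_u)+g(y_v)$ as an orthogonal direct sum of the rescaled forms $\tfrac14\bar f|_{\overline{H}_{r_1}}$ and $\tfrac14\bar g|_{\overline{H}_{r_2}}$ (rank $R_{H_r}$, sign $\varepsilon_{f,1}\varepsilon_{g,2}$, the factor $\tfrac14$ being a square) and then reads the count off from the single-form point-count Lemma \ref{lem:1}; this is structurally cleaner, avoids cyclotomic arithmetic entirely, and makes explicit the step $f(x_u)=\tfrac14 f(a)$ for $u=L_f(a)$, which the paper's proof silently absorbs when it replaces $\sum_{u}\zeta_p^{zf(x_u)}$ by $\sigma_z\bigl(\sum_{x\in\overline{H}_{r_1}}\zeta_p^{f(x)}\bigr)$ (harmless there only because $z\mapsto z/4$ permutes $\F_p^*$ and $\bar\eta(z/4)=\bar\eta(z)$, a point you correctly flag). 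The only mild caveat in your version is that Lemma \ref{lem:1} is stated for subspaces of $\F_q$, so you are implicitly invoking it for a quadratic form on the abstract $r$-dimensional $\F_p$-space $H_r$; since that lemma depends only on the rank and sign of the form, this is a cosmetic rather than substantive extension.
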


\begin{proof} From the orthogonality of exponential sums and Lemma \ref{lem:0}, we have
\begin{align*}
  &S_{fg,H_r}(t)=\frac{1}{p} \sum\limits_{u\in L_f(H_{r_1})} \sum\limits_{v\in L_g(H_{r_2})} \sum\limits_{z\in \F_p} \zeta^{z(f(x_u)+g(y_v)-t)}\\
&=p^{r-1} + \frac{1}{p} \sum\limits_{z\in \F_p^*}\zeta^{-zt}\sigma_{z}\Big(\sum\limits_{x\in \overline{H}_{r_1}}\zeta^{f(x)} \sum\limits_{y\in \overline{H}_{r_2}}\zeta^{g(y_v)}\Big)\\
&=p^{r-1} + \frac{1}{p} \sum\limits_{z\in \F_p^*}\zeta^{-zt}\sigma_{z}\Big(\varepsilon_{f,1}\varepsilon_{g,2}(p^*)^{\frac{R_{f,1}+R_{g,2}}{2}}p^{r-(R_{f,1}+R_{g,2})}\Big)\\
&=p^{r-1} + \varepsilon_{f,1}\varepsilon_{g,2}p^{r-(R_{f,1}+R_{g,2})-1} \sum\limits_{z\in \F_p^*}\sigma_{z}\Big(\zeta^{-t}(p^*)^{\frac{R_{f,1}+R_{g,2}}{2}}\Big)\\
&=\left\{\begin{array}{ll}
p^{r-1} + \varepsilon_{f,1}\varepsilon_{g,2}p^{r-1}\bar{\eta}(t)(p^{\ast})^{-\frac{R_{f,1}+R_{g,2}-1}{2}}, & \textrm{if\ } R_{H_r}  \ \textrm{is odd\ }, \\
p^{r-1} + \varepsilon_{f,1}\varepsilon_{g,2}p^{r-1}\upsilon(t)(p^{\ast})^{-\frac{R_{f,1}+R_{g,2}}{2}},  & \textrm{if\ } R_{H_r} \ \textrm{is even\ }.
\end{array}
\right.
\end{align*}
\end{proof}

%\begin{lemma}\label{lem:main2} Let $f(x),g(y)$ be quadratic forms over $\mathbb{F}_{q_i},i=1,2$ defined in \eqref{eq:f} with the sign $\varepsilon_f, \varepsilon_g$ and the rank $R_f, R_g$ respectively. If $R= R_f+R_g$ is even %and $\varepsilon_{f}\varepsilon_{g}=\bar{\eta}(-1)^{\frac{R}{2}}$, there exists an $(s-\frac{R}{2})$-dimensional subspace
%$H$ of $ \mathbb{F}$ such that $f(x)+g(y)=0$ for any $(x,y)\in H$ and $ \F_{q_1}^{\perp_f}\times \F_{q_2}^{\perp_g}\subseteq H$.
%\end{lemma}
%\begin{proof} We only prove the case of $R_f$ even and $\varepsilon_f=-\overline{\eta}(-1)^{\frac{R_f}{2}}$, and the case of $f, g$ degenerate.
%\end{proof}

By Lemma~\ref{thm:wd}, we know that the dimension of the code $C_{D}$ defined in \eqref{defcode1} is $s$.
So, by Lemma \ref{pro:d_r}, we give a general formula,
that is
\begin{align}
        d_{r}(C_{D})&=n-\max\Big\{|H_r^\perp\cap D|: H_r \in [\mathbb{F},r]_{p}\Big\}  \\
        &=n-\max\Big\{|H_{s-r}\cap D|: H_{s-r} \in [\mathbb{F},s-r]_{p}\Big\}\label{eq:d_r:2},
\end{align}
which will be employed to calculate the generalized Hamming weight $d_r(C_D)$, and here

$H_r^\perp = \Big\{(x,y)\in\F:\Tr^{s_1}(ux)+\Tr^{s_2}(vy) =0, \textrm{for any $(u,v)\in H_r$}\Big \}$.
 %is the Euclidean dual space of $H_r$ under the ordinary inner relationship.

\begin{lemma} \label{lem:d_r:2}
Let $f(x),g(y)$ be quadratic forms over $\mathbb{F}_{q_i},i=1,2$ defined in \eqref{eq:f} with the sign $\varepsilon_f, \varepsilon_g$ and the rank $R_f, R_g$ respectively. Let $H_r$ be an $r$-dimensional subspace of $\mathbb{F}$, define $N(H_r)=\Big\{(x,y)\in \mathbb{F}: f(x)+g(y)=0, \Tr^{s_1}(ux)+\Tr^{s_2}(vy) =0, \ \textrm{ for any}\ (u,v)\in H_r\Big\}$. Define $R= R_f+R_g$. We have the following.

(1) If $R$ is even,
$$|N(H_r)|=
p^{s-r-1}\Big(1+\varepsilon_f \varepsilon_g(p^*)^{-\frac{R}{2}}\sum\limits_{(u,v)\in H_r\bigcap (S_f\times S_g)}\upsilon(f(x_u)+g(y_v))\Big),$$

(2)If $R$ is odd,
$$
|N(H_r)|=
p^{s-r-1}\Big(1+\varepsilon_f \varepsilon_g(p^*)^{-\frac{R-1}{2}}\sum\limits_{(u,v)\in H_r\bigcap (S_f\times S_g)}\bar{\eta}(f(x_u)+g(y_v))\Big),
$$
where $x_{u},y_v$ satisfy $L_{f}(x_{u})=-\frac{u}{2}, L_{g}(y_{v})=-\frac{v}{2}$.
\end{lemma}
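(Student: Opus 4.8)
The plan is to expand $|N(H_r)|$ via the standard orthogonality trick for the two linear constraints and the one quadratic constraint, and then recognize the resulting character sum as an evaluation covered by Lemma~\ref{lem:6}. First I would write
$$
|N(H_r)| = \frac{1}{p^{r+1}} \sum_{(x,y)\in\mathbb{F}} \sum_{z\in\F_p} \zeta_p^{z(f(x)+g(y))} \sum_{(u,v)\in H_r} \zeta_p^{\Tr^{s_1}(ux)+\Tr^{s_2}(vy)},
$$
using that the inner sum over $(u,v)\in H_r$ is $p^r$ when $(x,y)\in H_r^{\perp}$ and $0$ otherwise, while the sum over $z$ picks out $f(x)+g(y)=0$. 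Separating the term $z=0$ (which contributes $p^{s-r-1}$, the main term) from $z\in\F_p^*$, I then need to handle, for each fixed $z\neq 0$,
$$
\sum_{(u,v)\in H_r} \Big(\sum_{x\in\F_{q_1}} \zeta_p^{zf(x)+\Tr^{s_1}(ux)}\Big)\Big(\sum_{y\in\F_{q_2}} \zeta_p^{zg(y)+\Tr^{s_2}(vy)}\Big),
$$
where I have swapped the order of summation so the inner sums are Weil-type sums over $\F_{q_1}$ and $\F_{q_2}$ separately.

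Next I would apply Lemma~\ref{lem:6} (after absorbing $z$ into the quadratic form, noting $\varepsilon_{zf}=\bar\eta(z)^{R_f}\varepsilon_f$ and $R_{zf}=R_f$, and similarly for $g$) to each inner sum. This kills every term with $u\notin S_f$ or $v\notin S_g$, leaving a sum over $(u,v)\in H_r\cap(S_f\times S_g)$. For such $(u,v)$ one gets a factor
$$
\varepsilon_f\varepsilon_g\, \bar\eta(z)^{R}\, (p^*)^{\frac{R}{2}}\, p^{s-R}\, \zeta_p^{-z(f(x_u)+g(y_v))},
$$
where $x_u,y_v$ are as in the statement (I should double-check the sign-change: replacing $f$ by $zf$ sends $x_b$ to $z^{-1}x_b$, so $f(x_{b,zf}) = z^{-1}f(x_b)$, and the exponent $-zf(x_{b,zf})$ becomes $-f(x_b)$; I would verify this bookkeeping carefully). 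Summing over $z\in\F_p^*$ then produces exactly $\sum_{z\in\F_p^*}\bar\eta(z)^{R}\zeta_p^{-z w}$ with $w=f(x_u)+g(y_v)$, which is a Gauss-sum-type quantity: it equals $\upsilon(w)$-type behavior when $R$ is even and $\bar\eta$-type behavior when $R$ is odd, matching Lemma~\ref{lem:7} with the appropriate power of $p^*$ already extracted. Collecting the powers of $p$ and $p^*$ gives the two displayed formulas.

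The main obstacle I anticipate is the careful tracking of signs and of the point $x_u$ under the scaling $f\mapsto zf$: one must confirm that the quantity $f(x_u)+g(y_v)$ appearing in the final answer is built from the fixed $x_u$ with $L_f(x_u)=-u/2$ (independent of $z$), not from a $z$-dependent shift, and that the stray factor $\bar\eta(z)^R$ from both signs and from $(p^*)^{R/2}$ under $\sigma_z$ combines correctly so that the $z$-sum is precisely of the form handled in Lemma~\ref{lem:7}. A secondary point is the exponent arithmetic: $p^{s-R}$ from Lemma~\ref{lem:6} (applied over $\F_{q_1}$ and $\F_{q_2}$, giving $p^{s_1-R_f}p^{s_2-R_g}=p^{s-R}$) times the prefactor $p^{-(r+1)}$, times $p^R\cdot(p^*)^{-(R-1)/2}$ (odd case) or $p^R\cdot(p^*)^{-R/2}$ (even case) coming out of Lemma~\ref{lem:7}, must reduce to $p^{s-r-1}$ times $(p^*)^{-(R-1)/2}$ or $(p^*)^{-R/2}$ respectively; I would present this as a short explicit computation rather than leaving it implicit.
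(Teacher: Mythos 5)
Your proposal is correct and follows essentially the same route as the paper: orthogonality of additive characters to encode both the quadratic condition and the $H_r$-linear conditions, Lemma~\ref{lem:6} applied factor-by-factor over $\F_{q_1}$ and $\F_{q_2}$ to kill the terms with $(u,v)\notin S_f\times S_g$, and Lemma~\ref{lem:7} (equivalently the Galois action $\sigma_z$) to evaluate the remaining sum over $z\in\F_p^*$. The bookkeeping point you flag is real but harmless: a direct application of Lemma~\ref{lem:6} to $zf$ actually yields the exponent $-z^{-1}\bigl(f(x_u)+g(y_v)\bigr)$ rather than $-z\bigl(f(x_u)+g(y_v)\bigr)$, and the two agree after the substitution $z\mapsto z^{-1}$ in the outer sum since $\bar{\eta}(z)=\bar{\eta}(z^{-1})$, which is exactly how the paper's use of $\sigma_z$ resolves it.
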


\begin{proof} Denote by $H_{r}^\star=H_{r}\setminus\{(0,0)\}$.
By the orthogonal property of additive characters, we have
\begin{align*}
&p^{r+1}|N(H_r)|
=\sum_{(x,y)\in \F}\sum_{z\in \F_p}\zeta_{p}^{z f(x)+zg(y)}\sum_{(u,v)\in H_r}\zeta_{p}^{\Tr^{s_1} (ux)+\Tr^{s_2} (vy)} \\
&=\sum_{(x,y)\in \F}\sum_{(u,v)\in H_r}\zeta_{p}^{\Tr^{s_1} (ux)+\Tr^{s_2} (vy)}
+\sum_{(x,y)\in \F}\sum_{z\in \F_p^{\ast}}\sum_{(u,v)\in H_r}\zeta_{p}^{z f(x)+\Tr^{s_1} (ux)+\Tr^{s_2} (vy)} \\
&=p^{s}+\varepsilon_f \varepsilon_g p^{s-R}\sum\limits_{(u,v)\in H_r\bigcap (S_f\times S_g)}\sum\limits_{z\in\F_p^*} \sigma_z((p^*)^{\frac{R}{2}}\zeta^{-(f(x_u)+g(y_v))}),
\end{align*}
where the last equation comes from Lemma \ref{lem:6} and
\begin{align*}
\sum_{(x,y)\in \F}\sum_{(u,v)\in H_{r}}\zeta_{p}^{\Tr^{s_1} (ux)+\Tr^{s_2} (vy)}
&=\sum_{(x,y)\in \F} \ 1+\sum_{(x,y)\in \F}\sum_{(u,v)\in H_{r}^\star}\zeta_{p}^{\Tr^{s_1} (ux)+\Tr^{s_2} (vy)}  \\
&=p^{s}+\sum_{(u,v)\in H_{r}^\star}\sum_{(x,y)\in \F}\zeta_{p}^{\Tr^{s_1}(ux)+\Tr^{s_2}(vy)}=p^{s}.
\end{align*}

So, the desired result is directly from Lemma \ref{lem:7}. Thus, we complete the proof.
\end{proof}

Let $H_r$ be an $r$-dimensional subspace of $\mathbb{F}$, it's easy to see that $|N(H_r)|=|H_r^\perp\cap\D |+1$. Hence, we have
\begin{equation}\label{eq:d_r:3}
     d_{r}(C_{\D})=n+1-\max\Big\{|N(H_r)|: H_r \in [\mathbb{F},r]_{p}\Big\}.
\end{equation}

In the following, we shall determine the weight hierarchy of $C_{D}$ in \eqref{defcode1}  by calculating $N(H_r)$ in Lemma~\ref{lem:d_r:2} and $|H_{s-r}\cap D|$ in \eqref{eq:d_r:2}.

\begin{theorem}\label{thm:wh:e1} Let $f(x),g(y)$ be quadratic forms over $\mathbb{F}_{q_i},i=1,2$ defined in \eqref{eq:f} with the sign $\varepsilon_f, \varepsilon_g$ and the rank $R_f, R_g$ respectively. Let $D$ be defined in \eqref{set:D1} and the code $C_{D}$ be defined in \eqref{defcode1}. Suppose that $R= R_f+R_g$ is even and $\varepsilon_{f}\varepsilon_{g}=\bar{\eta}(-1)^{\frac{R}{2}}$. Then we have the following.

$$ d_{r}(C_{D})=\left\{\begin{array}{ll}
p^{s-1}-p^{s-1-r},   0< r\leq \frac{R}{2}-1, \\
p^{s-1}-p^{s-r}+(p-1)p^{s-1-\frac{R}{2}}, \frac{R}{2} \leq r \leq s.
\end{array}
\right.
$$
\end{theorem}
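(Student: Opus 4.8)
I will use identity \eqref{eq:d_r:2}, which reduces the problem to evaluating $M_r:=\max\bigl\{|H_{s-r}\cap D|:H_{s-r}\in[\mathbb F,s-r]_p\bigr\}$ and then setting $d_r(C_D)=n-M_r$. The central remark is that $q_0(x,y):=f(x)+g(y)$ is a quadratic form on $\mathbb F\cong\F_{p^s}$ with block-diagonal Gram matrix, hence of rank $R=R_f+R_g$, sign $\varepsilon_f\varepsilon_g=\varepsilon$, and radical $\F_{q_1}^{\perp_f}\times\F_{q_2}^{\perp_g}$ of dimension $s-R$. Since $D=\{(x,y)\in\mathbb F:q_0(x,y)=0\}\setminus\{(0,0)\}$ and $(0,0)$ lies in every subspace, $|H\cap D|=|\{(x,y)\in H:q_0(x,y)=0\}|-1$, so Lemma~\ref{lem:1} (with $\beta=0$, applied to $q_0|_H$) gives, for $\dim H=s-r$,
\[
|H\cap D|=\begin{cases} p^{s-r-1}\bigl(1+(p-1)\varepsilon_H(p^\ast)^{-R_H/2}\bigr)-1,& R_H\ \text{even},\\ p^{s-r-1}-1,& R_H\ \text{odd},\end{cases}
\]
where $R_H,\varepsilon_H$ are the rank and sign of $q_0|_H$. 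Finally, using $p^\ast=\bar\eta(-1)p$ and the hypothesis $\varepsilon=\bar\eta(-1)^{R/2}$ one checks $\varepsilon_f\varepsilon_g(p^\ast)^{-R/2}=p^{-R/2}$, so $n=p^{s-1}-1+(p-1)p^{s-1-R/2}$.

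Because $\bigl|\varepsilon_H(p^\ast)^{-R_H/2}\bigr|=p^{-R_H/2}\le 1$, the right-hand side above is largest precisely when $R_H$ takes the smallest \emph{even} value attainable by a restriction $q_0|_H$ with $\dim H=s-r$, and the sign is the favourable one $\varepsilon_H=\bar\eta(-1)^{R_H/2}$ (which makes the bracket equal $1+(p-1)p^{-R_H/2}>1$, hence larger than the value $1$ coming from odd $R_H$). So the crux is to determine that minimal rank and show it is achieved with this sign. To do so I would pass to the non-degenerate quadratic form $\bar q_0$ induced by $q_0$ on the quotient $\mathbb F/(\F_{q_1}^{\perp_f}\times\F_{q_2}^{\perp_g})\cong\F_p^{R}$; by the hypothesis $\bar q_0$ is of $+$-type, i.e.\ hyperbolic with Witt index $R/2$. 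Writing $\bar H$ for the image of $H$ in the quotient, $\dim\bar H$ ranges over $[\max(0,R-r),\min(s-r,R)]$, while $R_H=\operatorname{rank}\bar q_0|_{\bar H}\ge\max(0,2\dim\bar H-R)$; this bound is minimised at $\dim\bar H=\max(0,R-r)$, giving minimal $R_H=0$ for $r\ge R/2$ and minimal $R_H=R-2r$ for $0<r\le R/2-1$. In the first range any totally isotropic $\bar H$ of dimension $\max(0,R-r)\le R/2$ realises $R_H=0$; in the second range, writing $\bar q_0$ in a hyperbolic basis $e_1,f_1,\dots,e_{R/2},f_{R/2}$ and taking $\bar H=\langle e_1,\dots,e_{R/2},f_1,\dots,f_{R/2-r}\rangle$ gives a restricted form of rank exactly $R-2r$ that is hyperbolic, hence of sign $\bar\eta(-1)^{(R-2r)/2}$, which is the favourable sign. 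Adjoining the whole radical to any lift of $\bar H$ produces the required $H$ with $\dim H=s-r$.

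Putting it together: for $R/2\le r\le s$, $M_r=p^{s-r-1}\cdot p-1=p^{s-r}-1$; for $0<r\le R/2-1$, $M_r=p^{s-r-1}\bigl(1+(p-1)p^{-(R-2r)/2}\bigr)-1=p^{s-r-1}+(p-1)p^{s-1-R/2}-1$, since $s-r-1-(R-2r)/2=s-1-R/2$. Subtracting from $n=p^{s-1}-1+(p-1)p^{s-1-R/2}$ yields $d_r(C_D)=p^{s-1}-p^{s-r}+(p-1)p^{s-1-R/2}$ for $R/2\le r\le s$ and $d_r(C_D)=p^{s-1}-p^{s-1-r}$ for $0<r\le R/2-1$, as claimed. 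As an independent check, one may instead go through \eqref{eq:d_r:3} and Lemma~\ref{lem:d_r:2}, where $d_r=n+1-\max|N(H_r)|$ and $|N(H_r)|$ depends only on $H_r\cap(S_f\times S_g)$; this is controlled, via Lemma~\ref{lem:main} and Lemma~\ref{lem:0}, by the dual form $(u,v)\mapsto f(x_u)+g(y_v)$ on $S_f\times S_g\cong\F_p^R$, which again has rank $R$ and sign $\varepsilon$, and the same Witt-index bookkeeping reproduces the identical $d_r$.

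The main obstacle I anticipate is the structural step in the second paragraph: pinning down the exact minimum of $\operatorname{rank}q_0|_H$ over all subspaces $H$ of a prescribed dimension and — crucially — exhibiting a subspace that attains this minimum together with the $+$-type sign, so that the favourable term $(p-1)p^{-R_H/2}$ in Lemma~\ref{lem:1} is actually reached. This is where the classification of quadratic forms over $\F_p$ and the Witt index of the hyperbolic form come in, and where the bookkeeping of the signs $\varepsilon_H(p^\ast)^{-R_H/2}$ has to be done with care; the remaining steps are substitution into Lemma~\ref{lem:1} and elementary arithmetic.
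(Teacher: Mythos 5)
Your proposal is correct, and for the main part of the argument it takes a genuinely different route from the paper. The paper splits the proof into two regimes: for $\frac{R}{2}\leq r\leq s$ it works on the primal side, building a maximal totally isotropic subspace of dimension $e_f+e_g+2=s-\frac{R}{2}$ out of the pieces $J_{e_f}$, $J_{e_g}$ and two auxiliary vectors supplied by Lemmas \ref{lem:2} and \ref{lem:2-2}; for $0<r\leq\frac{R}{2}-1$ it switches to the dual side, maximizing $|N(H_r)|$ via the exponential-sum formula of Lemma \ref{lem:d_r:2} and the dual form $(u,v)\mapsto f(x_u)+g(y_v)$ on $S_f\times S_g$, with explicit maximizers $H_r=L_f(J_{f,r_1})\times L_g(J_{g,r_2})$. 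You instead stay entirely on the primal side in both regimes: you treat $q_0(x,y)=f(x)+g(y)$ as a single quadratic form of rank $R$ and sign $\varepsilon_f\varepsilon_g$ on $\mathbb{F}\cong\F_p^{s}$, apply Lemma \ref{lem:1} to $q_0|_H$, and determine the minimal achievable even rank $\max(0,R-2r)$ together with the favourable (hyperbolic) sign by Witt-index bookkeeping on the nondegenerate quotient. Your key structural step --- $R_H\geq\max(0,2\dim\bar H-R)$ with $\dim\bar H\geq R-r$, attained by $\bar H=\langle e_1,\dots,e_{R/2},f_1,\dots,f_{R/2-r}\rangle$ in a hyperbolic basis --- is sound, and it buys you a uniform treatment of both ranges that avoids the paper's case split on the individual parities and signs of $f$ and $g$ (the paper only writes out the subcase $R_f$ even, $\varepsilon_f=-\bar\eta(-1)^{R_f/2}$). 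What the paper's dual-side machinery buys in exchange is reusability: Lemmas \ref{lem:d_r:2} and \ref{lem:main} are the workhorses for Theorems \ref{thm:wh:e2} and \ref{thm:wh:o} as well, where the primal form is not of $+$-type and the optimal subspaces are less transparent. One small point of hygiene in your write-up: Lemma \ref{lem:1} is stated for quadratic forms on $\F_q$, so you should say explicitly that it applies verbatim to $q_0$ on the $\F_p$-space $\mathbb{F}=\F_{q_1}\times\F_{q_2}$ (rank and sign read off the block-diagonal Gram matrix), and that rank and sign are preserved when passing between $q_0|_H$ and $\bar q_0|_{\bar H}$; both are routine.
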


\begin{proof} We only prove the case of $R_f$ even and $\varepsilon_f=-\overline{\eta}(-1)^{\frac{R_f}{2}}$. For $f$ and $g$, let $e_f$ and $e_g$ be defined in Lemma \ref{lem:2}. In this case, $e_f=s_1-\frac{R_f+2}{2}$ and $e_g=s_2-\frac{R_g+2}{2}$.

When $s-(e_f+e_g+2) \leq r \leq s$, then $0 \leq s-r \leq e_f+e_g+2$.
By Lemma~\ref{lem:2}, there exists an $e_f$-dimensional subspace $J_{e_f}$ of $\F_{q_1}$ such that $f(x)=0$ for any $x\in J_{e_f}$, and an $e_g$-dimensional subspace $J_{e_g}$ of $\F_{q_2}$ such that $g(y)=0$ for any $y\in J_{e_g}$. By Lemma \ref{lem:2-2} and its remark, we can take $(u',v'), (u'',v'')\in D$, such that $J_{e_f}^{\perp_f}=J_{e_f}\bigoplus <u',u''>, J_{e_g}^{\perp_g}= J_{e_g}\bigoplus <v',v''> $, where $ f(u'+u'')=f(u')+f(u'')$ and $ g(v'+v'')=g(v')+g(v'')$.
Note that the dimension of the subspace $J_{e_f}\times J_{e_g}$ is $e_f+e_g$. Let $H_{s-r}$ be an $(s-r)$-dimensional subspace of $(J_{e_f}\times J_{e_g})\bigoplus \Big<(u',v'),(u'',v'')\Big>$,
then, $|H_{s-r}\cap D|=p^{s-r}-1$.
Hence, by \eqref{eq:d_r:2}, we have
$$
d_{r}(C_{\D})=n-\max\Big\{|D \cap H|: H \in [\mathbb{F},s-r]_{p}\Big\}=p^{s-1}-p^{s-r}+(p-1)p^{s-1-\frac{R}{2}}.
$$
Thus, it remains to determine $d_r(C_{\D})$ when $0< r\leq s-(e_f+e_g)-3$.

 When $0< r\leq s-(e_f+e_g)-3$, that is, $0< r\leq \frac{R}{2}-1$. Suppose $H_r$ is an $r$-dimensional subspace of $\F$. Recall that $v(0)=p-1$ and $v(x)=-1$ for $x\in\mathbb{F}_{p}^{\ast}$.
By Lemma~\ref{lem:d_r:2}, we have
\begin{align*}
N(H_r)&=p^{s-r-1}\Big(1+\varepsilon_f \varepsilon_g(p^*)^{-\frac{R}{2}}\sum\limits_{(u,v)\in H_r\bigcap (S_f\times S_g)}\upsilon(f(x_u)+g(y_v))\Big)\\
&=p^{s-r-1}\Big(1+p^{-\frac{R}{2}}\sum\limits_{(u,v)\in H_r\bigcap (S_f\times S_g)}\upsilon(f(x_u)+g(y_v))\Big).
\end{align*}

For $0< r\leq \frac{R}{2}-2$, let $J_{f,r_1}$ be an $r_1$-dimensional subspace of $J_{e_f}\Big/\F_{q_1}^{\perp_f}$, and $J_{g,r_2}$ be an $r_2$-dimensional subspace of $J_{e_g}\Big/\F_{q_2}^{\perp_g}$, where $1\leq r_1\leq \frac{R_f-2}{2}, 0\leq r_2\leq \frac{R_g-2}{2}$ and $r_1+r_2=r$ .
Taking \begin{equation}\label{eq:H} H_{r}=L_f(J_{f,r_1})\times L_g(J_{g,r_2}),\end{equation} then $N(H_r)$ reaches its maximum
%$H_{r}$ is our desired $r$-dimensional subspace of $\F_q^2$ and its
$$ N(H_r)=p^{s-(r+1)}\Big(1+(p-1)p^{r-\frac{R}{2}}\Big) = p^{s-1-r} + (p-1)p^{s-1-\frac{R}{2}}. $$
So, for $0< r\leq\frac{R}{2}-2$, the desired result is obtained by Lemma~\ref{lem:d_r:2} and \eqref{eq:d_r:3}.

For $r=\frac{R}{2}-1$, define $H_r=H_{\frac{R}{2}-2}\oplus \Big\langle(L_f(u'),L_g(v')) \Big\rangle$, where $H_{\frac{R}{2}-2}$ is defined as \eqref{eq:H} and $(u',v')$ is defined as above, then $N(H_r)$ reaches its maximum
$$ N(H_r)=p^{s-(r+1)}\Big(1+(p-1)p^{r-\frac{R}{2}}\Big) = p^{s-1-r} + (p-1)p^{s-1-\frac{R}{2}}. $$
So, for $r=\frac{R}{2}-1$, the desired result is obtained by Lemma~\ref{lem:d_r:2} and \eqref{eq:d_r:3}.

\end{proof}

\begin{remark} Recall the Griesmer-like bound on GHWs of linear codes \cite{TV95}. Let $C$ be an $[n, k]$ linear code over $\F_p$. For any integer $1 \leq r \leq k$, it is
known that
$$d_r(C) \geq \sum\limits_{i=0}^{r-1}\lceil\frac{d_1(C)}{p^i}\rceil.$$
 It is easy to check that the codes $C_{D}$  in Theorem \ref{thm:wh:e1} satisfy
that $d_r(C) = \sum\limits_{i=0}^{r-1}\lceil\frac{d_1(C)}{p^i}\rceil$, for $1 \leq r \leq \frac{R}{2}$.
\end{remark}

\begin{theorem}\label{thm:wh:e2} Let $f(x),g(y)$ be quadratic forms over $\mathbb{F}_{q_i},i=1,2$ defined in \eqref{eq:f} with the sign $\varepsilon_f, \varepsilon_g$ and the rank $R_f, R_g$ respectively. Let $D$ be defined in \eqref{set:D1} and the code $C_{D}$ be defined in \eqref{defcode1}. Suppose that $R= R_f+R_g$ is even and $\varepsilon_{f}\varepsilon_{g}=-\bar{\eta}(-1)^{\frac{R}{2}}$.
 Then we have the following.

  $$
d_{r}(C_{D})=\left\{\begin{array}{ll}
p^{s-2}(p-1)(1-p^{1-\frac{R}{2}}), r=1\\
p^{s-1}-p^{s-1-r}-(p^2-1)p^{s-2-\frac{R}{2}}, 1< r\leq\frac{R}{2},\\
p^{s-1}-p^{s-r}-(p-1)p^{s-1-\frac{R}{2}}, \frac{R}{2}+1 \leq r \leq s.
\end{array}
\right.
$$

\end{theorem}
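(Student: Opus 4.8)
The plan is to mirror the structure of the proof of Theorem~\ref{thm:wh:e1}, but now in the harder ``minus'' case $\varepsilon_f\varepsilon_g=-\bar\eta(-1)^{R/2}$, which forces a careful analysis of when the character sum $\sum_{(u,v)\in H_r\cap(S_f\times S_g)}\upsilon(f(x_u)+g(y_v))$ in Lemma~\ref{lem:d_r:2}(1) can be made large. We may assume WLOG that $R_f$ and $R_g$ are both even (the subcase where one of them is odd does not occur when $R$ is even and $\varepsilon_f\varepsilon_g=-\bar\eta(-1)^{R/2}$ unless both parities match; I would handle the bookkeeping of which $(\varepsilon_f,\varepsilon_g)$-combinations arise at the top, just as the companion theorem fixes one representative case). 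Set $e_f=s_1-\tfrac{R_f}{2}$ or $s_1-\tfrac{R_f+2}{2}$ according to the sign of $f$, and likewise $e_g$, via Lemma~\ref{lem:2}; the key numeric identity driving the break points is $s-(e_f+e_g)=\tfrac{R}{2}$ or $\tfrac{R}{2}+1$ depending on the individual signs, which is exactly why the three regimes $r=1$, $1<r\le\tfrac R2$, and $\tfrac R2+1\le r\le s$ appear.

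\textbf{Large $r$ (the regime $\tfrac R2+1\le r\le s$).} Here I would use the formula $d_r(C_D)=n-\max\{|H_{s-r}\cap D|:H_{s-r}\in[\mathbb F,s-r]_p\}$ from \eqref{eq:d_r:2}. Using Lemma~\ref{lem:2} to get totally isotropic subspaces $J_{e_f}\subseteq\mathbb F_{q_1}$, $J_{e_g}\subseteq\mathbb F_{q_2}$ and Lemma~\ref{lem:2-2} (with its Remark) to enlarge by rank-one pieces sitting inside $D$, I would build an $(s-r)$-dimensional subspace $H_{s-r}$ of $(J_{e_f}\times J_{e_g})\oplus\langle(u',v'),(u'',v'')\rangle$ on which $f(x)+g(y)$ vanishes identically, giving $|H_{s-r}\cap D|=p^{s-r}-1$. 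Plugging into \eqref{eq:d_r:2} with $n=p^{s-1}-1-(p-1)p^{s-1-R/2}$ (from Lemma~\ref{thm:wd}, minus case) yields $d_r(C_D)=p^{s-1}-p^{s-r}-(p-1)p^{s-1-R/2}$. The only subtlety is verifying that no larger intersection is possible; that follows because on any $(s-r)$-space $H$, $|H\cap D|\le p^{s-r}-1$ with equality iff $f+g$ is identically zero on $H$, and $s-r\le e_f+e_g+2$ is precisely the dimension budget for which such an isotropic-plus-two construction exists.

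\textbf{Small $r$ (the regimes $r=1$ and $1<r\le\tfrac R2$).} Here I would instead use \eqref{eq:d_r:3}, i.e. $d_r(C_D)=n+1-\max_{H_r}|N(H_r)|$, with $N(H_r)$ given by Lemma~\ref{lem:d_r:2}(1). Since $\varepsilon_f\varepsilon_g(p^*)^{-R/2}=-p^{-R/2}$ in this case, maximizing $|N(H_r)|$ means making $\sum_{(u,v)\in H_r\cap(S_f\times S_g)}\upsilon(f(x_u)+g(y_v))$ as \emph{negative} as possible — the reverse of Theorem~\ref{thm:wh:e1}, which is the crux of the difference. By Lemma~\ref{lem:main}, if we take $H_r=L_f(\overline H_{r_1})\times L_g(\overline H_{r_2})$ with $\overline H_{r_i}$ inside the quotient by the radical, the multiset of values $f(x_u)+g(y_v)$ over $H_r$ is governed by $S_{fg,H_r}(t)$; choosing the small isotropic directions first keeps the ranks $R_{f,1}, R_{g,2}$ small and makes the sum concentrate near its extreme. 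The optimal choice is a subspace $H_r$ on which $f(x_u)+g(y_v)$ takes only nonzero values as much as possible (contributing $-1$ each) — concretely $H_r$ a rank-one-sign-adjusted subspace built from Lemma~\ref{lem:2-2} direction(s) together with isotropic padding — giving $\sum\upsilon=-p^{r}$ (roughly), hence $|N(H_r)|=p^{s-r-1}(1-p^{r-R/2})=p^{s-1-r}-p^{s-1-R/2}$ for $1<r\le\tfrac R2$ and the slightly different boundary value at $r=1$. Substituting into \eqref{eq:d_r:3} gives the stated $d_r$. I expect the \textbf{main obstacle} to be this extremal combinatorial step: proving that no $r$-dimensional $H_r$ can push $\sum_{(u,v)\in H_r\cap(S_f\times S_g)}\upsilon(f(x_u)+g(y_v))$ below the claimed minimum. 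This requires controlling, for an arbitrary (not necessarily ``product-type'') subspace $H_r$, both $|H_r\cap(S_f\times S_g)|$ and the rank/sign of the restricted form $(f\oplus g)|$ on the relevant quotient, then invoking Lemma~\ref{lem:main} to see that the value distribution $S_{fg,H_r}(t)$ cannot be more lopsided toward $t\ne 0$ than the product construction achieves — essentially because the sign constraint $\varepsilon_f\varepsilon_g=-\bar\eta(-1)^{R/2}$ caps how isotropic $f\oplus g$ can be. The matching upper bound $d_r(C_D)\ge(\text{claimed value})$ should also be cross-checked against the Griesmer-like bound of \cite{TV95} as a sanity check, though here (unlike Theorem~\ref{thm:wh:e1}) the bound will not be met with equality.
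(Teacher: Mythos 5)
Your overall architecture matches the paper's: for $\frac{R}{2}+1\le r\le s$ you use \eqref{eq:d_r:2} with a totally isotropic $(s-r)$-space built from Lemma \ref{lem:2} (and indeed $s-(e_f+e_g)=\frac{R}{2}+1$ in every subcase here, so only the budget $e_f+e_g$ is needed, not $e_f+e_g+2$), and for small $r$ you use \eqref{eq:d_r:3} together with Lemma \ref{lem:d_r:2}(1) and correctly observe that, since $\varepsilon_f\varepsilon_g(p^*)^{-R/2}=-p^{-R/2}$, one must make $\Sigma:=\sum_{(u,v)\in H_r\cap(S_f\times S_g)}\upsilon(f(x_u)+g(y_v))$ as negative as possible. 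However, your extremal computation in that regime is wrong, and it is exactly the step that produces the constant $(p^2-1)p^{s-2-\frac{R}{2}}$ in the statement. You claim the optimum is ``$f(x_u)+g(y_v)$ takes only nonzero values'', giving $\Sigma\approx -p^r$ and $|N(H_r)|=p^{s-r-1}(1-p^{r-R/2})$. This is impossible: $(0,0)\in H_r$ always contributes $\upsilon(0)=p-1$, and more precisely $\Sigma=p\,S_{fg,H_r}(0)-p^r$ with $S_{fg,H_r}(0)\ge p^{r-2}$ by Lemma \ref{lem:main} (the minimum being attained only when the restricted form has even rank $2$ and minus type). Hence $\min\Sigma=-(p-1)p^{r-1}$, not $-p^r$; moreover your substitution has a sign slip (a negative $\Sigma$ \emph{increases} $N(H_r)$, yet you wrote $1-p^{r-R/2}<1$). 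Plugging your $N(H_r)$ into \eqref{eq:d_r:3} gives $p^{s-1}-p^{s-1-r}-(p-2)p^{s-1-\frac{R}{2}}$, which does not equal the claimed $p^{s-1}-p^{s-1-r}-(p^2-1)p^{s-2-\frac{R}{2}}$, so your argument does not prove the theorem as stated.

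The paper's construction for $2\le r\le\frac{R}{2}$ is the missing ingredient: using Lemma \ref{lem:2-2}, take $J_{f,r_1}\subseteq\F_{q_1}$ with $R_{J_{f,r_1}}=1$, $\varepsilon_{J_{f,r_1}}=\bar{\eta}(\beta)$ for a non-square $\beta$, and $J_{g,r_2}\subseteq\F_{q_2}$ with $R_{J_{g,r_2}}=1$, $\varepsilon_{J_{g,r_2}}=\bar{\eta}(-1)$, and set $H_r=L_f(J_{f,r_1})\times L_g(J_{g,r_2})$. The restricted form then has rank $2$ and sign $\bar{\eta}(-\beta)$, i.e.\ it is anisotropic, so $S_{fg,H_r}(0)=p^{r-2}$, $\Sigma=-(p-1)p^{r-1}$, and $N(H_r)=p^{s-1-r}+(p-1)p^{s-2-\frac{R}{2}}$, which yields the stated $d_r$. (For $r=1$ the best one can do is $\Sigma=0$ via a rank-one line inside $S_f\times S_g$, giving $N=p^{s-2}$ and the separate boundary value.) Your stated ``main obstacle''---ruling out non-product subspaces that push $\Sigma$ lower---is a fair concern, but the bound $S_{fg,H_r}(0)\ge p^{r-2}$ already caps $\Sigma$ from below for product-type subspaces, and the paper, like you, does not supply a complete argument for arbitrary $H_r$; the decisive error in your proposal is the unattainable extremal value itself.
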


\begin{proof}  For $f$ and $g$, let $e_f$ and $e_g$ be defined in Lemma \ref{lem:2}.
The proof of Case $s-(e_f+e_g) \leq r \leq s$ is similar to that of Theorem \ref{thm:wh:e1}, we omit it here.

We only prove the case of $R_f$ is even and $0< r\leq s-(e_f+e_g)-1$. Without loss of generality, we assume that $\varepsilon_f=-\bar{\eta}(-1)^{\frac{R_f}{2}}$. In this case, $0< r\leq \frac{R}{2}$.

Suppose $H_r$ is an $r$-dimensional subspace of $\F$.
By Lemma~\ref{lem:d_r:2}, we have
\begin{align*}
N(H_r)&=p^{s-r-1}\Big(1+\varepsilon_f \varepsilon_g(p^*)^{-\frac{R}{2}}\sum\limits_{(u,v)\in H_r\bigcap (S_f\times S_g)}\upsilon(f(x_u)+g(y_v))\Big)\\
&=p^{s-r-1}\Big(1-p^{-\frac{R}{2}}\sum\limits_{(u,v)\in H_r\bigcap (S_f\times S_g)}\upsilon(f(x_u)+g(y_v))\Big).
\end{align*}

For $r=1$, it's easy to see that the maximum of $N(H_r)$ is equal to $p^{s-2}$, which concludes that $d_{1}(C_{D})=p^{s-2}(p-1)(1-p^{1-\frac{R}{2}})$.

For $2\leq r\leq\frac{R}{2}$, by Lemma \ref{lem:2-2}, there exists an $r_1$-dimensional subspace
$J_{f,r_1}$ of $ \mathbb{F}_{q_1}$ such that $R_{J_{f,r_1}}=1, \varepsilon_{J_{f,r_1}}=\bar{\eta}(\beta)$ and $J_{f,r_1}\cap \F_{q_1}^{\perp_f}=\{0\}$ for some non-square $\beta\in\F_p^*$, and $r_2$-dimensional subspace
$J_{g,r_2}$ of $ \mathbb{F}_{q_2}$ such that $R_{J_{g,r_2}}=1, \varepsilon_{J_{g,r_2}}=\bar{\eta}(-1)$ and $J_{g,r_2}\cap \F_{q_2}^{\perp_g}=\{0\}$,
 where $1\leq r_1\leq \frac{R_f}{2}, 1\leq r_2\leq \frac{R_g}{2}$ and $r_1+r_2=r$ .
Taking \begin{equation}\label{eq:H2} H_{r}=L_f(J_{f,r_1})\times L_g(J_{g,r_2}),\end{equation} then by Lemma \ref{lem:main} $N(H_r)$ reaches its maximum
%$H_{r}$ is our desired $r$-dimensional subspace of $\F_q^2$ and its
$$ N(H_r)=p^{s-(r+1)}\Big(1+(p-1)p^{r-1-\frac{R}{2}}\Big) = p^{s-1-r} + (p-1)p^{s-2-\frac{R}{2}}. $$
So, for $2\leq r\leq\frac{R}{2}$, the desired result is obtained by Lemma~\ref{lem:d_r:2} and \eqref{eq:d_r:3}.
\end{proof}

\begin{theorem}\label{thm:wh:o} Let $f(x),g(y)$ be quadratic forms over $\mathbb{F}_{q_i},i=1,2$ defined in \eqref{eq:f} with the sign $\varepsilon_f, \varepsilon_g$ and the rank $R_f, R_g$ respectively. Let $D$ be defined in \eqref{set:D1} and the code $C_{D}$ be defined in \eqref{defcode1}. Suppose that $R= R_f+R_g$ is odd.
 Then we have the following.
  $$
d_{r}(C_{D})=\left\{\begin{array}{ll}
p^{s-1}-p^{s-1-r} - (p-1)p^{s-2-\frac{R-1}{2}},   0< r\leq \frac{R-1}{2}, \\
p^{s-1}-p^{s-r}, \frac{R+1}{2} \leq r \leq s.
\end{array}
\right.
$$

\end{theorem}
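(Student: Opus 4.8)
The plan is to follow the same two-regime strategy used in Theorems~\ref{thm:wh:e1} and \ref{thm:wh:e2}: for large $r$ compute $d_r(C_D)$ via the ``intersection with a subspace'' formula \eqref{eq:d_r:2} by exhibiting an $(s-r)$-dimensional subspace on which $f$ and $g$ nearly vanish, and for small $r$ compute it via the $N(H_r)$ formula \eqref{eq:d_r:3} together with Lemma~\ref{lem:d_r:2} by choosing $H_r$ so that the character sum $\sum_{(u,v)\in H_r\cap(S_f\times S_g)}\bar\eta(f(x_u)+g(y_v))$ is as large as possible. Since $R=R_f+R_g$ is odd, exactly one of $R_f,R_g$ is odd; without loss of generality assume $R_f$ is odd and $R_g$ is even. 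Let $e_f,e_g$ be as in Lemma~\ref{lem:2}; here $e_f=s_1-\frac{R_f+1}{2}$, while $e_g$ is either $s_2-\frac{R_g}{2}$ or $s_2-\frac{R_g+2}{2}$ depending on $\varepsilon_g$. In all cases one checks $s-(e_f+e_g)$ equals either $\frac{R+1}{2}$ or $\frac{R+1}{2}+1$, and the threshold in the statement is $\frac{R+1}{2}$; a small separate argument (as in Theorem~\ref{thm:wh:e2}, handling the ``gap'' dimension by adjoining one vector from a rank-$1$ piece coming from Lemma~\ref{lem:2-2} and its Remark) will patch the boundary case.

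For the regime $\frac{R+1}{2}\le r\le s$, I would argue as follows. By Lemma~\ref{lem:2} pick $J_{e_f}\subseteq\F_{q_1}$ with $f|_{J_{e_f}}=0$ and $J_{e_g}\subseteq\F_{q_2}$ with $g|_{J_{e_g}}=0$, and if necessary enlarge using Lemma~\ref{lem:2-2} and its Remark (which in the odd-rank case gives an $\frac{R_f+1}{2}$-dimensional subspace $H$ with $R_H=1$ and $|H\cap D_\beta|=2p^{\frac{R_f-1}{2}}$ for suitable $\beta$) so as to obtain inside $\F$ a subspace $W$ of dimension $e_f+e_g$ (or a bit larger) on which $f(x)+g(y)$ takes the value $0$ on almost all points, with $|W\cap D|$ as large as possible. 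Then take $H_{s-r}$ to be an $(s-r)$-dimensional subspace of $W$; the key numerical claim is $\max_{H\in[\F,s-r]_p}|H\cap D|=p^{s-r}-1$, which combined with $n=p^{s-1}-1$ (the odd-$R$ value from Lemma~\ref{thm:wd}) and \eqref{eq:d_r:2} gives $d_r(C_D)=p^{s-1}-p^{s-r}$. The upper bound direction — that no $(s-r)$-dimensional $H$ can beat $p^{s-r}-1$ — follows from Lemma~\ref{lem:d_r:2}(2): $|H\cap D|=|N(H^{\perp})|-1$ and the character sum there, being a sum of values of $\bar\eta$ over a set of size at most $p^{r}$-ish, is bounded so that $|N(H^\perp)|\le p^{s-r}$ with equality only in the vanishing configuration.

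For the regime $0<r\le\frac{R-1}{2}$, I would use \eqref{eq:d_r:3} and Lemma~\ref{lem:d_r:2}(2):
\[
N(H_r)=p^{s-r-1}\Bigl(1+\varepsilon_f\varepsilon_g(p^\ast)^{-\frac{R-1}{2}}\!\!\sum_{(u,v)\in H_r\cap(S_f\times S_g)}\!\!\bar\eta\bigl(f(x_u)+g(y_v)\bigr)\Bigr),
\]
so maximizing $N(H_r)$ amounts to maximizing $\varepsilon_f\varepsilon_g(p^\ast)^{-\frac{R-1}{2}}$ times that character sum. The natural choice is $H_r=L_f(J_{f,r_1})\times L_g(J_{g,r_2})$ with $J_{f,r_1}\subseteq\F_{q_1}/\F_{q_1}^{\perp_f}$ and $J_{g,r_2}\subseteq\F_{q_2}/\F_{q_2}^{\perp_g}$ chosen via Lemma~\ref{lem:2-2} (each a direct sum of rank-$1$ pieces with prescribed sign), so that by Lemma~\ref{lem:main}(1) (odd $R_{H_r}$ — note $R_{H_r}=r_1+r_2=r$ when each piece has rank $1$... more precisely one arranges $R_{f,1}+R_{g,2}$ odd) the restricted counting function $S_{fg,H_r}(t)$ is $p^{r-1}(1+\varepsilon_{f,1}\varepsilon_{g,2}(p^\ast)^{-\frac{R_{H_r}-1}{2}}\bar\eta(t))$; summing $\bar\eta(t)S_{fg,H_r}(t)$ over $t\in\F_p$ evaluates the inner sum to $\pm(p-1)p^{r-1}(p^\ast)^{-\frac{r-1}{2}}$ up to sign, and picking the $r_i$'s and the signs $\varepsilon_{J_{f,r_1}},\varepsilon_{J_{g,r_2}}$ to make this contribution positive yields $N(H_r)=p^{s-1-r}+(p-1)p^{s-2-\frac{R-1}{2}}$. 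Feeding this into \eqref{eq:d_r:3} with $n+1=p^{s-1}$ gives $d_r(C_D)=p^{s-1}-p^{s-1-r}-(p-1)p^{s-2-\frac{R-1}{2}}$, as claimed. One still must check optimality, i.e.\ that no other $H_r$ produces a larger $N(H_r)$; this is the real work and follows the same envelope estimate as in the previous two theorems — bound $\bigl|\sum_{(u,v)\in H_r\cap(S_f\times S_g)}\bar\eta(f(x_u)+g(y_v))\bigr|$ by decomposing $H_r\cap(S_f\times S_g)$ over cosets and applying Lemma~\ref{lem:1}/Lemma~\ref{lem:main} to each slice.

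\textbf{Main obstacle.} The delicate point is the optimality (upper-bound) half in the small-$r$ regime: showing that the specific $H_r$ built from Lemma~\ref{lem:2-2} actually \emph{maximizes} $N(H_r)$ over all $r$-dimensional subspaces, rather than just realizing the claimed value. This requires a uniform bound on the character sum $\sum_{(u,v)\in H_r\cap(S_f\times S_g)}\bar\eta(f(x_u)+g(y_v))$ for arbitrary $H_r$, obtained by intersecting $H_r$ with $S_f\times S_g$, pushing down to the quotient spaces $\F_{q_i}/\F_{q_i}^{\perp}$ where $f,g$ become nondegenerate, and controlling the rank and sign of the induced quadratic form on the image — essentially the content of Lemma~\ref{lem:main} applied in its full generality. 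A secondary (but routine) nuisance is the bookkeeping at the boundary dimension $r=\frac{R-1}{2}$ versus $r=\frac{R+1}{2}$, where the $e_f+e_g$ count from Lemma~\ref{lem:2} may be off by one from $s-\frac{R+1}{2}$ and one must splice in a single rank-$1$ vector exactly as done for $r=\frac{R}{2}-1$ in the proof of Theorem~\ref{thm:wh:e1}.
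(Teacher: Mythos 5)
Your overall strategy coincides with the paper's: split into the regimes $0<r\le\frac{R-1}{2}$ (maximize $N(H_r)$ via Lemma \ref{lem:d_r:2} and \eqref{eq:d_r:3}) and $\frac{R+1}{2}\le r\le s$ (exhibit an $(s-r)$-dimensional subspace inside a totally isotropic product, enlarged at the boundary dimension by one vector of $D$, and use \eqref{eq:d_r:2}). The large-$r$ regime and the splice at $r=\frac{R+1}{2}$ are handled essentially as in the paper, and your identification of the optimality (upper-bound) question as the delicate point is fair — though the paper itself also only asserts that its $H_r$ "reaches the maximum," so neither argument supplies that half in detail.

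There is, however, a concrete error in your small-$r$ construction. You build $H_r=L_f(J_{f,r_1})\times L_g(J_{g,r_2})$ from pieces described as direct sums of rank-$1$ pieces, so that $R_{H_r}=R_{f,1}+R_{g,2}=r$, and you only insist that this quantity be odd. Plugging $R_{H_r}=r$ into Lemma \ref{lem:main}(1) and summing $\bar\eta(t)S_{fg,H_r}(t)$ over $t\in\F_p$ gives a character sum of absolute value $(p-1)p^{r-1}\,p^{-\frac{r-1}{2}}=(p-1)p^{\frac{r-1}{2}}$, not $(p-1)p^{r-1}$; after multiplying by $p^{-\frac{R-1}{2}}$ this yields $N(H_r)=p^{s-1-r}\pm(p-1)p^{s-2-\frac{R-1}{2}-\frac{r-1}{2}}$, which for every $r>1$ is strictly smaller than the value $p^{s-1-r}+(p-1)p^{s-2-\frac{R-1}{2}}$ that you then assert. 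To reach the claimed maximum the total restricted rank must be exactly $R_{H_r}=1$: Lemma \ref{lem:2-2} supplies an $r_1$-dimensional subspace whose restriction has rank $1$ \emph{as a whole} (with sign $+1$), and the other factor must be taken totally isotropic, i.e.\ $J_{g,r_2}\subseteq J_{e_g}/\F_{q_2}^{\perp_g}$ via Lemma \ref{lem:2}, with $1\le r_1\le\frac{R_f}{2}$ and $0\le r_2\le\frac{R_g-1}{2}$. This is exactly what the paper does, and it is the missing ingredient that makes the constructive half of your small-$r$ argument produce the stated value.
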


\begin{proof}
The proof of Case $s-(e_f+e_g) \leq r \leq s$ is similar to that of Theorem \ref{thm:wh:e1}, we omit it here.

 When $0< r\leq s-(e_f+e_g)-1$, we only prove the case of $\varepsilon_{f}\varepsilon_{g}=\bar{\eta}(-1)^{\frac{R-1}{2}}$ and $R_f$ even and $\varepsilon_f=-\bar{\eta}(-1)^{\frac{R_f}{2}}$. In this case, $0< r\leq \frac{R+1}{2}$.

Suppose $H_r$ is an $r$-dimensional subspace of $\F$.
By Lemma~\ref{lem:d_r:2}, we have
\begin{align*}
N(H_r)&=p^{s-r-1}\Big(1+\varepsilon_f \varepsilon_g(p^*)^{-\frac{R-1}{2}}\sum\limits_{(u,v)\in H_r\bigcap (S_f\times S_g)}\overline{\eta}(f(x_u)+g(y_v))\Big)\\
&=p^{s-r-1}\Big(1+p^{-\frac{R-1}{2}}\sum\limits_{(u,v)\in H_r\bigcap (S_f\times S_g)}\overline{\eta}(f(x_u)+g(y_v))\Big).
\end{align*}

For $0< r\leq \frac{R-1}{2}$, by Lemma \ref{lem:2-2} and Lemma \ref{lem:2}, there exists an $r_1$-dimensional subspace
$J_{f,r_1}$ of $ \mathbb{F}_{q_1}$ such that $R_{J_{f,r_1}}=1, \varepsilon_{J_{f,r_1}}=1$ and $J_{f,r_1}\cap \F_{q_1}^{\perp_f}=\{0\}$, and $J_{g,r_2}$ be an $r_2$-dimensional subspace of $J_{e_g}\Big/\F_{q_2}^{\perp_g}$,
 where $1\leq r_1\leq \frac{R_f}{2}, 0\leq r_2\leq \frac{R_g-1}{2}$ and $r_1+r_2=r$ .
Taking \begin{equation}\label{eq:H2} H_{r}=L_f(J_{f,r_1})\times L_g(J_{g,r_2}),\end{equation} then $N(H_r)$ reaches its maximum
%$H_{r}$ is our desired $r$-dimensional subspace of $\F_q^2$ and its
$$ N(H_r)=p^{s-(r+1)}\Big(1+(p-1)p^{r-1-\frac{R-1}{2}}\Big) = p^{s-1-r} + (p-1)p^{s-2-\frac{R-1}{2}}. $$
So, for $0< r\leq\frac{R-1}{2}$, the desired result is obtained by Lemma~\ref{lem:d_r:2} and \eqref{eq:d_r:3}.

For $r=\frac{R+1}{2}$, taking $(u,v)\in D$, where $u\in J_{e_f}^{\perp_f}$ and $f(u) \neq 0$.
Define $H_{s-r}=(J_{e_f}\times J_{e_g})\oplus \Big\langle(u,v) \Big\rangle$, which concludes that
$$d_r(C_D)=p^{s-1}-p^{s-r}.$$

\end{proof}

\section{Concluding Remarks}

Inspired by the works of \cite{LYF18,LL22}, using the extended defining-set method, we constructed a family of 3-weight linear codes in \eqref{defcode1}.
Our defining-set is  $ D=\Big\{(x,y)\in \Big(\F_{p^{s_1}}\times\F_{p^{s_2}}\Big)\Big\backslash\{(0,0)\}: f(x)+g(y)=0\Big\}$, $f(x),g(y)$ are $p$-ary quadratic forms over $\mathbb{F}_{p^{s_i}},i=1,2$.
By exponential sum theory together with our previous formula of generalized Hamming weight, we determined their weight distributions and weight hierarchies completely.

 Let $w_{\min}$ and $w_{\max}$ denote the minimum and maximum nonzero weight of our obtained code $C_{D}$ defined in \eqref{defcode1}, respectively.
If $R\geq3$, then it can be easily checked that
$$
 \frac{w_{\min}}{w_{\max}}> \frac{p-1}{p}.
$$
By the results in \cite{AB98} and \cite{21YD06}, we know that every nonzero codeword of $C_{\D}$ is minimal and most of the codes we constructed
are suitable for constructing secret sharing schemes with interesting properties.

\section*{Acknowledgement}

For the research, the first author was supported by the National Science Foundation of China Grant No.12001312
and the second author was supported by Key Projects in Natural Science Research of Anhui Provincial Department of Education No.2022AH050594 and Anhui Provincial Natural Science Foundation No.1908085MA02.

\end{document}